\definecolor{DarkRed}{rgb}{0.5,0.1,0.1}
\definecolor{DarkBlue}{rgb}{0.1,0.1,0.5}
\def\BState{\State\hskip-\ALG@thistlm}
\newtheorem{theorem}{Theorem}
\newtheorem{lemma}{Lemma}[section]
\newtheorem{proposition}[lemma]{Proposition}
\newtheorem{claim}[lemma]{Claim}
\newtheorem*{claim*}{Claim}
\newtheorem*{proposition*}{Proposition}
\newtheorem*{lemma*}{Lemma}
\newtheorem*{problem*}{Problem}
\newtheorem*{mdresult}{Main Result}
\renewcommand{\qed}{\nobreak \ifvmode \relax \else
      \ifdim\lastskip<1.5em \hskip-\lastskip
      \hskip1.5em plus0em minus0.5em \fi \nobreak
      \vrule height0.75em width0.5em depth0.25em\fi}
\newcommand{\ourinfo}[1]{Department of Computer and Information Science, University of Pennsylvania. Supported in part by National Science Foundation grants CCF-1552909, CCF-1617851, and IIS-1447470.  \newline\noindent Email: \texttt{#1}.}
\newcommand{\eps}{\varepsilon}
\newcommand{\Paren}[1]{\Big(#1\Big)}
\newcommand{\Bracket}[1]{\Big[#1\Big]}
\newcommand{\bracket}[1]{\left[#1\right]}
\newcommand{\paren}[1]{\ensuremath{\left(#1\right)}\xspace}
\newcommand{\card}[1]{\left\vert{#1}\right\vert}
\newcommand{\floor}[1]{{\left\lfloor{#1}\right\rfloor}}
\newcommand{\prob}[1]{\Pr\paren{#1}}
\newcommand{\expect}[1]{\Ex{#1}}
\newcommand{\set}[1]{\ensuremath{\left\{ #1 \right\}}}
\newcommand{\OPT}{\ensuremath{\mbox{\sc opt}}\xspace}
\newcommand{\opt}{\ensuremath{\mbox{\sc opt}}\xspace}
\newcommand{\ALG}{\ensuremath{\mbox{\sc alg}}\xspace}
\DeclareMathOperator*{\Prob}{\ensuremath{\textnormal{Pr}}}
\renewcommand{\Pr}{\Prob}
\newcommand{\Ex}{\Exp}
\newcommand{\bigO}[1]{\ensuremath{O\Paren{#1}}}
\newcommand{\etal}{{\it et al.\,}}
\newcommand{\FG}{\ensuremath{\mathcal{G}}}
\newcommand{\event}[1]{\ensuremath{{\sf E}_{#1}}}
\newcommand{\rs}{Ruzsa-Szemer\'{e}di\xspace}
\newcommand{\bMatch}{\ensuremath{\floor{{1 \over p}}}}
\newenvironment{tbox}{\begin{tcolorbox}[
		enlarge top by=5pt,
		enlarge bottom by=5pt,
	%	 breakable,
		 boxsep=0pt,
                  left=4pt,
                  right=4pt,
                  top=10pt,
                  arc=0pt,
                  boxrule=1pt,toprule=1pt,
                  colback=white
                  ]%%
	}
{\end{tcolorbox}}
\newcommand{\EX}[1]{\ensuremath{\mathbb{E}\Bracket{#1}}}
\renewcommand{\Ex}[1]{\ensuremath{\mathbb{E}[#1]}}
\newcommand{\constp}{\ensuremath{p_0}}
\newcommand{\basicalg}{\ensuremath{\textsf{MatchingCover}}\xspace}
\newcommand{\MP}{\ensuremath{\textnormal{MP-(\ref{eq:MP})}}\xspace}
\newcommand{\BinM}{\ensuremath{B_M}}
\newcommand{\BoutM}{\ensuremath{B_{\overline{M}}}}
\newcommand{\outMs}{\ensuremath{s_{\overline{M}}}}
\newcommand{\newM}{\ensuremath{M^+}}
\newcommand{\Fstar}{\ensuremath{F^{\star}}}
\title{The Stochastic Matching Problem: Beating Half with a Non-Adaptive Algorithm}  
\author{Sepehr Assadi\thanks{\ourinfo{\{sassadi,sanjeev,yangli2\}@cis.upenn.edu}}  \and Sanjeev Khanna\footnotemark[1] \and Yang Li\footnotemark[1]}
\date{}
\begin{document}
\maketitle

\thispagestyle{empty}
\begin{abstract}
  In the \emph{stochastic matching} problem, we are given a general (not necessarily bipartite) graph $G(V,E)$, where each edge in $E$ is
  \emph{realized} with some constant probability $p > 0$ and the goal is to compute a \emph{bounded-degree} (bounded by a function depending
  only on $p$) subgraph $H$ of $G$ such that the expected maximum matching size in $H$ is close to the expected maximum matching size in
  $G$. The algorithms in this setting are considered \emph{non-adaptive} as they have to choose the subgraph $H$ without knowing any
  information about the set of realized edges in $G$. Originally motivated by an application to \emph{kidney exchange}, the stochastic matching 
  problem and its variants have received significant attention in recent years.
    
  The state-of-the-art non-adaptive algorithms for stochastic matching achieve an approximation ratio of $\frac{1}{2}-\eps$ for any
  $\eps > 0$, naturally raising the question that if $1/2$ is the limit of what can be achieved with a non-adaptive algorithm.  In this
  work, we resolve this question by presenting the first algorithm for stochastic matching with an approximation guarantee that is strictly
  better than $1/2$: the algorithm computes a subgraph $H$ of $G$ with the maximum degree $\bigO{\frac{\log{(1/ p)}}{p}}$ such that
  the ratio of expected size of a maximum matching in realizations of $H$ and $G$ is at least $1/2+\delta_0$ for some absolute constant $\delta_0 > 0$. 
%%  
%%  $0.52$ when the edge realization probability $p$ is
%%  sufficiently small, and $0.5 + \delta_0$ for some absolute constant $\delta_0$ for any value of $0 < p < 1$.
  The degree bound on $H$ achieved by our algorithm is essentially the best possible (up to an $O(\log{(1/p)})$ factor) for any
  \emph{constant factor} approximation algorithm, since an $\Omega(\frac{1}{p})$ degree in $H$ is necessary for a vertex to acquire at least
  one incident edge in a realization.  
  
  Our result makes progress towards answering an open problem of Blum~\etal (EC 2015) regarding the
  possibility of achieving a $(1-\eps)$-approximation for the stochastic matching problem using non-adaptive algorithms. From the technical
  point of view, a key ingredient of our algorithm is a structural result showing that a graph whose expected maximum matching size is
  $\opt$ always contains a $b$-matching of size (essentially) $b \cdot \opt$, for $b = \frac{1}{p}$.
\end{abstract}
\clearpage
\setcounter{page}{1}

\section{Introduction}\label{sec:intro}

We study the problem of finding a maximum matching in presence of \emph{uncertainty} in the input graph.  Specifically, we consider the
\emph{stochastic} setting where for an input graph $G(V,E)$ and a parameter $p > 0$, each edge in $E$ is realized \emph{independently}
w.p.\footnote{Throughout, we use \emph{w.p.}, \emph{w.h.p}, and \emph{prob.}  to abbreviate ``with probability'', ``with high probability'',
  and ``probability'', respectively.} $p$. We call the graph obtained from this stochastic process (which should be viewed as a random
variable) a \emph{realization} of $G(V,E)$, denoted by $G_p(V,E_p)$.  The \emph{stochastic matching} problem can now be defined as follows.
Given a general (not necessarily bipartite) graph $G(V,E)$ and an edge realization probability $p > 0$, compute a subgraph $H$ of $G$ such
that:

\begin{enumerate}[(i)]
\item The expected maximum matching size in a realization of $H$ is close to the expected maximum matching size in a realization of $G$.
\item The degree of each vertex in $H$ is bounded by some function that only depends on $p$, independent of the size of $G$.
\end{enumerate}
In other words, the stochastic matching problem asks if every graph $G$ contains a subgraph $H$ of \emph{bounded degree} (depending only on the realization probability $p$) 
such that the expected matching size in realizations of $G$ and $H$ are close.

\paragraph{Kidney exchange.} 
A canonical and arguably the most important application of the stochastic matching problem appears in \emph{kidney exchange}, where patients
waiting for kidney transplant can \emph{swap} their incompatible donors to each get a compatible donor. The goal is to identify a maximum
set of patient-donor pairs to perform such a swap (i.e., finds a maximum matching).  However, through medical records of patients and donors, one can
only filter out the patient-donor pairs where donation is \emph{impossible}, and more costly and time consuming tests must be performed
before a transplant can be performed.

The stochastic setting captures the essence of the need of extra tests for kidney exchange: an algorithm selects a set of patient-donor
pairs to perform the extra tests (i.e., computes a subgraph $H$), while making sure that there is a large matching among the pairs that pass
the extra tests. The objective that the subgraph $H$ has small degree captures the essence of minimizing the number of (costly and time
consuming) tests that each patient needs to go through.  The kidney exchange problem has been extensively studied in the literature,
particularly under stochastic settings (see, e.g.,~\cite{DickersonPS13,Anderson20012015,ManloveO14,Unver10,AkbarpourLG14,AndersonAGK15,AwasthiS09,DickersonPS12,DickersonS15}).
We remark that the the stochastic matching problem captures the simplest form of the kidney exchange, referred to as \emph{pairwise exchange}. Modern kidney exchange programs regularly 
employ swaps between three or patient-donor pairs and this setting has also been studied previously in the literature; we refer the interested reader to~\cite{BlumDHPSS15} for more details.

% maximum matchings $M_1,\ldots,M_R$,
% one by one from $G$ while removing the edges of the previously picked matching before picking the next one.

\paragraph{Previous work.} Our results are directly related to the results in~\cite{BlumDHPSS15} and~\cite{ECpaper} which we describe in
detail below.  Blum~\etal~\cite{BlumDHPSS15} introduced the (variant of) stochastic matching problem and proposed a
$(\frac{1}{2}-\eps)$-approximation algorithm (for any $\eps > 0$) which requires the subgraph $H$ to have maximum degree of
$\frac{\log{(1/\eps)}}{p^{\Theta(1/\eps)}}$. The algorithm of Blum~\etal~\cite{BlumDHPSS15} works as follows: Pick a maximum matching $M_i$
in $G$ and remove the edges in $M_i$; repeat for $R:=\frac{\log{(1/\eps)}}{p^{\Theta(1/\eps)}}$ times.  In order to analyze this algorithm,
the authors showed that, for any $i\in[R]$, if the size of the maximum matching among the realized edges in $M_1,\ldots,M_i$ is less than
$\opt/2$, the matching $M_{i+1}$ contains many augmenting paths of $M$ of length $O(\frac{1}{\eps})$; since each such augmenting path is
realized w.p. $p^{O(\frac{1}{\eps})}$, one needs to repeat this augmentation process for $\frac{1}{p^{O(\frac{1}{\eps})}}$ time (as is
roughly the value of $R$) to increase the matching size to $(\frac{1}{2}-\eps)\cdot\opt$.

In a recent work~\cite{ECpaper}, we showed that in order to obtain a $(\frac{1}{2}-\eps)$-approximation algorithm, one only needs a subgraph
$H$ with max-degree of $O(\frac{\log{(1/\eps p)}}{\eps p})$, significantly smaller than the bounds in~\cite{BlumDHPSS15}. Interestingly, the
algorithm of~\cite{ECpaper} and the one in~\cite{BlumDHPSS15} are essentially identical (modulo an extra sparsification part required
in~\cite{ECpaper}) and the main difference is in the analysis. In~\cite{ECpaper}, we completely bypassed the need for using augmenting paths
in the analysis and instead, took advantage of structural properties of matchings in a global manner (by using \emph{Tutte-Berge formula};
see, e.g.,~\cite{lovasz2009matching}). In particular, we showed that repeatedly picking $O(\frac{\log{(1/\eps p)}}{\eps p})$ maximum
matchings (as described before) suffices to ensure that, among the chosen edges, a matching of size (essentially) equal to the size of the
last chosen matching would be realized (with high probability). Having this, one can show that running the aforementioned algorithm even for
$R:= O(\frac{\log{(1/\eps p)}}{\eps p})$ suffices to obtain a $(\frac{1}{2}-\eps)$-approximation.
  
\emph{Adaptive} algorithms for stochastic matching have also been studied by~\cite{BlumDHPSS15,ECpaper}. In an adaptive algorithm, instead
of a single graph $H$, one is allowed to pick a \emph{small} number of bounded-degree graphs $H_1,\ldots,H_k$ where the choice of each $H_i$
can be made after \emph{probing} the edges in $H_1, H_2, \ldots, H_{i-1}$ to see if they are realized or not.  A $(1-\eps)$-approximation adaptive
algorithm for this problem was first proposed in~\cite{BlumDHPSS15} and further refined in~\cite{ECpaper}.
   
\paragraph{Beating the half approximation.} 
This state-of-the-art highlights the following natural question: 
\begin{quote}
  \emph{Is half-approximation the limit for non-adaptive algorithms or is there a non-adaptive algorithm that achieves approximation
    guarantee of {strictly better than half}? }
\end{quote}

It is worth mentioning that in many variations, obtaining half approximation for the maximum matching problem is typically a relatively easy task (usually via a greedy approach), while beating half approximation turns out to be a
difficult task. Some notable examples include, randomized greedy matching~\cite{DyerF91,AronsonDFS95,PoloczekS12,ChanCWZ14}, online stochastic matching~\cite{KVV90,MehtaP12,MehtaWZ15}, and
semi-streaming matching~\cite{FKMSZ05,KonradMM12}.

\subsection{Our Contributions} 
We resolve the aforementioned question of obtaining an algorithm for stochastic matching with an approximation guarantee of \emph{strictly}
better than half. Formally,

\begin{theorem}\label{thm:main}
  There exists an algorithm that given any graph $G(V,E)$ and any parameter $p > 0$, computes a subgraph $H(V,Q)$ of $G$ with a maximum
  degree of $\bigO{\frac{\log{(1/p)}}{p}}$ such that the ratio of the expected maximum matching size of a realization of $H$ to a
  realization of $G$ is at least:
	\begin{enumerate} [(i)]
	\item \label{part:main1} $0.52$ when $p \leq p_0$ for an \emph{absolute} constant $\constp > 0$.   
	\item \label{part:main2} $0.5 + \delta_0$ for any $0 < p < 1$, where $\delta_0 > 0$ is an \emph{absolute} constant.
	\end{enumerate}
\end{theorem}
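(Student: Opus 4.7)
The plan is to strengthen the iterative maximum-matching scheme of~\cite{BlumDHPSS15,ECpaper} with a new structural ingredient: a $b$-matching $B$ of $G$ of size essentially $b \cdot \opt$ with $b = \floor{1/p}$, where $\opt$ denotes the expected maximum matching size in a realization of $G$ (the existence of such a $B$ is the key structural lemma advertised in the abstract). The subgraph $H$ is then constructed as the union of $R = \bigO{\log{(1/p)}/p}$ iteratively chosen maximum matchings $M_1, \ldots, M_R$ of $G$ (each picked after removing the previous ones) together with $B$. Since $B$ has maximum degree at most $\floor{1/p}$, the total maximum degree of $H$ is still $\bigO{\log{(1/p)}/p}$, meeting the degree bound of the theorem.

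For the analysis, I would split into two cases depending on the size of the last matching $M_R$. Fix a small constant $\delta > 0$. In the easy case, $|M_R| \gee (1/2 + \delta)\opt$, and the analysis of~\cite{ECpaper} already implies that a realization of $H$ contains a matching of size $(1 - o(1)) \cdot |M_R|$, which strictly beats $\opt/2$ by a constant. In the hard case, $|M_R| < (1/2 + \delta)\opt$, so the matchings $M_1, \ldots, M_R$ alone can only guarantee about $\opt/2$, and the argument has to leverage $B$ to push the matching size strictly past $\opt/2$.

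The idea for the hard case is that $B$ contains $\approx \opt/p$ edges and has maximum degree $\floor{1/p}$, so it spans at least $2\opt$ distinct vertices; each such vertex is incident to a realized edge of $B$ with probability $1 - (1-p)^{\floor{1/p}} \gee 1 - 1/e$, and therefore $B_p$ contains in expectation at least roughly $1.26 \opt$ non-isolated vertices. If these realized edges were well distributed, this alone would already produce a matching of size close to $0.63 \opt$ inside $B_p$. The two regimes in the theorem --- $0.52$ for $p \lee \constp$ and $0.5 + \delta_0$ for all $p$ --- reflect how cleanly this factor $(1-1/e)$ can be extracted: for small $p$ the large value of $b$ yields stronger concentration and cleaner independence across incident edges, pushing the overall ratio close to $0.52$, whereas for larger $p$ the $b$-matching is thinner and one only obtains the absolute constant $\delta_0$.

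The main obstacle I anticipate is making the hard case quantitative. The realized edges of $B$ do not form a matching by themselves but rather a random subgraph of $B$ with expected degree $\approx 1$, so I would need to carefully extract a matching from $B_p$ that is either vertex-disjoint from, or properly augments, the matching produced by $M_1, \ldots, M_R$. I expect this to require a global Tutte-Berge-style structural argument (in the spirit of~\cite{ECpaper}) showing that whenever $|M_R|$ is as small as $(1/2 + \delta)\opt$, many odd-component or unmatched vertices of $G$ must lie in the span of $B$, so that short augmenting paths through $B_p$ are realized with constant probability and lift the overall matching past $\opt/2$ by an absolute constant $\delta_0$.
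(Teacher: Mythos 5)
Your overall architecture (take a maximum $\floor{1/p}$-matching $B$ guaranteed by the structural lemma, add $R=\bigO{\log(1/p)/p}$ iteratively chosen maximum matchings, and case on $\card{M_R}$) matches the paper's, but the engine you propose for the hard case does not work. If $\card{M_R} < (1/2+\delta)\opt$ you hope to beat $\opt/2$ essentially by extracting a large matching from $B_p$ itself, arguing that $B$ spans $\ge 2\opt$ vertices each of which is non-isolated in $B_p$ with probability $\ge 1-1/e$, ``hence'' a matching of size close to $0.63\opt$. Non-isolated vertices do not pair up: an adversarial $\floor{1/p}$-matching of size $\opt/p$ is only known to yield a realized matching of size about $\opt/3$ (improvable to roughly $0.4\opt$), and even a \emph{random} $1/p$-matching gives only about $0.56\opt$ via Karp--Sipser; the paper explicitly notes it is unclear whether any direct extraction from $B_p$ can beat $\opt/2$. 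The actual argument instead keeps $B$ edge-disjoint from the cover (by running \basicalg on $E\setminus B$), takes the realized cover matching $M$ of size about $\opt/2$, and shows that the edges of $B$ incident on exactly one matched vertex create many vertex-disjoint length-three augmenting paths; counting the ``successful'' edges of $M$ requires the degree-based probability bound and a non-linear minimization program, which is where the bulk of the work lies. Your case split also leaves a hole at the low end: when $\card{M_R}$ is well below $\opt/2$ (say $0.3\opt$), the cover only guarantees a realized matching of size about $\card{M_R}$, and no constant-probability length-three augmentation can bridge a $0.2\opt$ deficit. That regime needs the separate observation (the paper's Claim on small $\card{M_R}$) that a small residual maximum matching forces any realized maximum matching of $G$ to have all but $(1+\eps)\card{M_R}$ of its edges inside $B\cup E_{MC}$, so $H$ already inherits a matching of size $\opt-(1+\eps)\card{M_R} > (1/2+\delta)\opt$.

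Part (ii) is also not established by your single algorithm. For $p$ bounded away from $0$ the structural lemma degenerates: for $p>1/2$ one has $b=\floor{1/p}=1$ and the guarantee $(b-1)\opt=0$ is vacuous, and more generally all the $(1-O(p_0))$ slack in the small-$p$ analysis disappears, so the explanation you give for the two regimes (better concentration for small $p$) is not the real distinction. The paper handles $p\ge p_0$ with a different algorithm: keep one maximum matching $M$ of $G$ plus a matching cover of $E\setminus M$, and augment the realized cover matching by length-one and length-three augmenting paths of $M$, which realize with probabilities $p$ and $p^2$; an LP over the counts of such paths gives $(1/2+\Theta(p^2))\opt$, and taking the minimum with the small-$p$ guarantee yields the absolute constant $\delta_0$. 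Your proposal needs this (or some substitute) to cover all $0<p<1$.
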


%implies the first \emph{non-adaptive} algorithm for stochastic matching with approximation ratio strictly better than $0.5$. This 

Our result in Theorem~\ref{thm:main} makes progress towards an open problem posed by Blum~\etal~\cite{BlumDHPSS15} regarding the possibility
of having a non-adaptive $(1-\eps)$-approximation algorithm for stochastic matching. We further remark that the assumption in Part~(\ref{part:main1}) of Theorem~\ref{thm:main} is standard
in the stochastic matching literature and is referred as the case of \emph{vanishing probabilities}, see, e.g.~\cite{MehtaP12,MehtaWZ15}. 

It is worth mentioning the max-degree on $H$ achieved in Theorem~\ref{thm:main} is essentially the best possible (up to an
$O(\log{{1 \over p}})$ factor) for any \emph{constant factor} approximation algorithm: suppose $G$ is a complete graph; in this case the
expected matching size in $G$ is $ n-o(n)$ by standard results on random graphs (see, e.g.,~\cite{Bollobas2001}, Chapter~7); however, if
max-degree of $H$ is $o(\frac{1}{p})$, then the expected number of realized edges in $H$ is $o(n)$, implying that the expected matching size
in $H$ is $o(n)$.
 
Our approach to proving Theorem~\ref{thm:main} can be divided into two parts. In the first part, we prove a structural result showing that
if a realization of $G$ has expected maximum matching size $\opt$, then $G$ itself should contain essentially $\frac{1}{p}$ edge-disjoint
matchings of size $\opt$ each.  This result, established through a characterization of $b$-matching size in general graphs (see
Section~\ref{sec:prelim}), sheds more light into the structure of a graph in terms of its expected maximum matching size, which may be of
independent interest.

In the second part, we combine the aforementioned structural result with the $(\frac{1}{2}-\eps)$-approximation algorithm of~\cite{ECpaper}
to obtain a matching of size strictly larger than $\opt/2$. In order to do this, we first find a collection of $\frac{1}{p}$ edge-disjoint
matchings of size at least $\opt$, remove them from the graph, and then run the algorithm of~\cite{ECpaper} on the remaining edges. We show
that the edges in this collection of edge-disjoint matchings must form many \emph{length-three augmenting paths} of the matching computed by
the algorithm of~\cite{ECpaper}, hence leading to a matching of size strictly larger than $\opt/2$. The analysis is separated into two steps:
we first formulate the increment in the matching size (through these augmenting paths) via a (non-linear) minimization program, and then
analyze the optimal solution of this minimization program and hence lower bound the increment in the matching size obtained from the
augmenting paths.

\paragraph{Other related work.} Multiple variants of stochastic matching have been considered in the literature. Blum~\etal~\cite{BlumGPS13}
studied a similar setting where one can only probe \emph{two} edges incident on any vertex and the goal is to find the optimal set of edges
to query.  Another well studied setting is the \emph{query-commit} model, whereby an algorithm probes one edge at a time and if an edge $e$
is probed and realized, then the algorithm must take $e$ as part of the matching it
outputs~\cite{CostelloTT12,Adamczyk11,ChenIKMR09,BansalGLMNR12,GuptaN13}. We refer the reader to~\cite{BlumDHPSS15} for a detailed
description of the related work.

\paragraph{Organization.} The rest of the paper is organized as follows. We start by providing a high level overview of our algorithm in
Section~\ref{sec:overview}. Next, in Section~\ref{sec:prelim}, we introduce the notation and preliminaries needed for the rest of the
paper. We prove our main structural result, i.e., $b$-matching lemma in Section~\ref{sec:b-matching}. Our main algorithm and its analysis, i.e., the proof 
of Part~(\ref{part:main1}) of Theorem~\ref{thm:main} are provided in Section~\ref{sec:main-alg}. Proof of Part~(\ref{part:main2}) of Theorem~\ref{thm:main}, i.e., an algorithm that works
for the large-probability case appears in Section~\ref{app:large-p}. We conclude the paper in Section~\ref{sec:conc}.

\renewcommand{\event}{\mathcal{E}}
\section{Technical Overview}\label{sec:overview}

In this section, we give a more detailed overview of the main ideas used in our algorithm for stochastic matching. For clarity of
exposition, throughout this section, we assume $p$ is a sufficiently small constant (corresponding to Part~(\ref{part:main1}) of
Theorem~\ref{thm:main}) and the expected maximum matching size in $G$ (i.e., $\opt$) is $n-o(n)$, or in other words, a realization of $G$,
$G_p$, has a near perfect matching in expectation.

Our starting point is the following observation: In order for $G_p$ to have a (near) perfect matching in expectation, the input graph $G$
must have many (roughly $1/p$) edge-disjoint (near) perfect matchings.  To gain some intuition why this is true, suppose for the moment that
the input graph is a bipartite graph $G(L,R,E)$. Then, by \emph{Hall's Marriage Theorem}, we know that in order for $G_p$ to have a matching
of size $n - o(n)$, for any two subsets $X \subseteq L$ and $Y \subseteq R$, with $\card{X} - \card{Y} \geq o(n)$, at least one edge from
$X$ to $\bar{Y}$ should realize in $G_p$. However, this requirement implies that in $G$, there should be $1/p$ edges from $X$ to $\bar{Y}$
so that at least one of these edges appears in $G_p$. One can then show that a bipartite graph $G$ with such a structure has $1/p$
edge-disjoint matchings of size at least $n-o(n)$.

In general, we need to handle graphs that are not necessarily bipartite. In order to adapt the previous strategy, we slightly relax our
requirement of having $1/p$ edge-disjoint matchings to having one (simple) $b$-matching\footnote{Recall that a (simple) $b$-matching is
  simply a graph with degree of each vertex bounded by $b$. See Section~\ref{sec:prelim} for more details.} of size $nb$ for the parameter
$b = \frac{1}{p}$. We show that,
\begin{itemize}
	\item[] \textbf{b-Matching Lemma.} Any graph $G$ where $G_p$ has a matching of size $n-o(n)$ in expectation, has a $\frac{1}{p}$-matching of size (essentially) $\frac{n}{p}$. 
\end{itemize}

% Unfortunately, the $b$-Matching Lemma only provides a
% \emph{necessary} condition for $G_p$ to have a matching of size $n-o(n)$. In other words, we cannot simply use $H(V,B)$ as our output graph
% and hope to compete with $G$ in terms of the expected matching size \TODO{check}.

Next, we combine the fact that a large $\frac{1}{p}$-matching, denoted by $B$, always exists in $G$, with the
$(\frac{1}{2}-\eps)$-approximation algorithm of~\cite{ECpaper} to obtain a strictly better than $\frac{1}{2}$-approximation algorithm.
%% 
%%To use the fact that a large $b$-matching, denoted by $B$, always exists to obtain a strictly better than $1/2$-approximation for the
%%stochastic matching problem, we combine it with a previously known $(1/2 - \eps)$-approximation algorithm of~\cite{ECpaper}.

To continue, we briefly describe the algorithm of~\cite{ECpaper}, which we refer to as \basicalg. \basicalg works by picking a maximum
matching $M_i$ in $G$ and removing the edges of $M_i$ for $R:= \Theta\paren{\frac{\log{(1/p)}}{p}}$ times\footnote{We remark
  that this algorithm has an extra \emph{sparsification} step which is needed to handle the case where $\opt = o(n)$. However, since in this
  section we assume $\opt = n-o(n)$, this extra step is not required.}. This collection of matchings, denoted by $E_{MC}$, is referred to as
a \emph{matching cover} of the original graph $G$. The main property of this matching cover, proved in~\cite{ECpaper}, is that the set of
realized edges in $E_{MC}$ has a matching of size (essentially) $\card{M_R}$; note that $M_R$ is the smallest size matchings among the
matchings in $E_{MC}$.

%  maximum matchings $M_1,\ldots,M_R$ one by one from $G$ while removing the edges of the previously
% picked matching before picking the next one
 
We are now ready to define our main algorithm: Pick a maximum $\frac{1}{p}$-matching $B$ from $G$; run \basicalg over the edges
$E\setminus B$ and obtain a matching cover $E_{MC}$; return $H(V,B \cup E_{MC})$.  If $\card{M_R} < ({1 \over 2} - \delta_0)n$, using the
fact that $E_{MC}$ is obtained by repeatedly picking maximum matchings, one can show that any matching $M$ of size $n-o(n)$ in $G$ has more
than $(\frac{1}{2} + \delta_0)n - o(n)$ edges in $B \cup E_{MC}$. This also implies that the expected matching size in $H$ is at least
$(\frac{1}{2} + \delta_0)n - o(n)$.  The more difficult case, which is where we concentrate bulk of our technical effort, is when
$\card{M_R} \geq (\frac{1}{2}-\delta_0)n$. For simplicity, assume $\card{M_R} = n/2$ from here on.

As stated above, if $|M_R| = n/2$, then in almost every realization of the edges in $E_{MC}$, there exists a matching $M$ of size at least
$n/2$.  Our strategy is to \emph{augment} the matching $M$ using the (realized) edges in $B$, so that the matching size becomes
$({1 \over 2} + \delta_0) n$.  It is important to note that the set of edges in $E_{MC}$ and $B$ are disjoint, and hence whether edges in
$E_{MC}$ and $B$ are realized are independent of each other.

Let $U$ be the set of vertices matched by $M$. There are two cases here to consider:
\begin{itemize}
\item \textbf{Case 1.} Nearly all edges in $B$ are incident on vertices in $U$.
\item \textbf{Case 2.} An $\eps$-fraction of edges in $B$ are not incident on $U$ (for some constant $\eps > 0$).
\end{itemize}
 
The second case is relatively easy to handle: we show that a realization of a $\frac{1}{p}$-matching with $N/p$ edges has a matching of size
at least $N/3$ in expectation. This implies that $B_p$ has a matching $M'$ of size $\eps \cdot \frac{n}{3} = \Theta(\eps) \cdot n$ which is
not incident on $U$. Consequently, $B \cup E_{MC}$ has a matching of size $\frac{n}{2} + \Theta(\eps) \cdot n$ in expectation.  The more
challenging task is to tackle the first case. To convey the main idea, we make a series of simplifying assumptions here: $(i)$ all edges in
$B$ are incident on $U$, $(ii)$ each edge in $B$ is incident on exactly one vertex in $U$, and $(iii)$ every vertex in $U$ is incident on
exactly $\frac{1}{p}$ edges of $B$.
 
Our goal is to identify a large collection of length-three augmenting paths for the matching $M$ using the edges of $B$.  To achieve this,
we consider the event that an edge $(u,v)$ in $M$ has a length-three augmenting path $a-u-v-b$ where $u$ (resp. $v$) is the only neighbor of
$a$ (resp. $b$).  We say such an edge $(u,v)$ is \emph{successful}. Since the length-three augmenting that certifies successful edges are
vertex-disjoint by definition, they can all (simultaneously) augment $M$.  Consequently, it suffices to lower bound the expected number of
successful edges, or, equivalently, to lower bound the prob. that each edge is successful.

% Consider an edge
% $(u,v) \in M$. We say the edge $(u,v)$ is \emph{successful} if each of $u$ and $v$ has a neighbor $a$ and $b$, respectively, such that $a$
% and $b$ have no neighbors other than $u$ and $v$. It is easy to see that every successful edge in $M$ has a (designated) length-three
% augmenting path. Moreover, the properties of vertices $a$ and $b$ ensure that the designated augmenting paths created by different
% successful edges are vertex-disjoint. Therefore, all successful edges can be augmented simultaneously (and each increase the matching size
% by $1$). Consequently, we only need to lower bound the expected number of successful edges.

% Let $\event_u$ (resp. $\event_v$) denote the event that the aforementioned property holds for the vertex $u$ (resp. $v$). Moreover, for any
% vertex $w$ neighbor of $u$ (resp. $v$) in $B$, we define $\event_w$ to be the event that $w$ has no neighbors other than $u,v$.  We have,
% \begin{align*}
% 	\Pr(\event_v) &= \Pr(\exists w \text{ s.t $\event_w$ and $(w,v) \in B_p$}) \\
% 	&= 1 - \paren{\prod_{(w,v) \in B} \paren{1 - \Pr(\event_w) \cdot \Pr((w,v) \in B_p)}} = 1 - \paren{\prod_{(w,v) \in B} (1 - (1-p)^{\frac{1}{p}-1} \cdot p)} \\
% 	&\approx 1-\paren{1-\frac{p}{e}}^{\frac{1}{p}}  \approx 1-e^{-\frac{1}{e}} 
% \end{align*}
% Similarly, we have $\Pr(\event_u) \approx 1-e^{-\frac{1}{e}}$. 

Let us further assume for the moment that $G$ is a bipartite graph. In this case, $u$ and $v$ do not share a common neighbor and we can
consider the neighborhood of $u$ and $v$ separately. The prob. that $u$ has a neighbor $w$ where $u$ is the only neighbor of $w$ (we say
$u$ is successful in this case) is not difficult to bound: enumerate all $1/p$ neighbors $w$ of $u$ and account for the the prob. that the
edge $(u,w)$ is realized and the prob. that no other edge incident on $w$ is realized.  A similar argument can be made for $v$.  Now, the
prob. that $(u,v)$ is successful is simply the product of the prob. that $u$ is successful and the prob. that $v$ is successful.

% the
% two events $\event_v$ and $\event_u$ would be independent and hence $\Pr\paren{\text{$(u,v)$ is successful}} = \Pr(\event_v) \cdot
% \Pr(\event_u) \approx \paren{1-e^{-\frac{1}{e}}}^2 \approx 0.09$.

% the two events are \emph{not} necessarily independent (due to the possible intersection between
% the neighbor-sets of $u$ and $v$ in $B$).  To remedy this, we show a way of decoupling the choices of $u$ and $v$ which allows us to prove a
% weaker bound of $\Pr\paren{\text{$(u,v)$ is successful}} \approx 0.06$.  Consequently, the expected final matching size can be lower bounded
% by:
%  \begin{align*}
%  	\card{M} + \sum_{(u,v) \in M} \Pr\paren{\text{$(u,v)$ is successful}} \approx n \cdot \paren{\frac{1}{2} + \frac{1}{2} \cdot 0.06} > 0.52n
%  \end{align*}

However, in general (non-bipartite) graphs, $u$ and $v$ might have common neighbors which results in prob. of $u$ being successful \emph{not independent} of 
prob. of $v$ being successful. Handling this case requires a more careful argument and analysis. Moreover, recall that in the above
discussion, we made rather strong simplifying assumptions about how the edges in $B$ are distributed across the vertices of $U$.  In order
to further remove these assumptions, in the actual analysis, we cast the probability of each edge $(u,v)$ being successful as a function of
the degrees of the vertices $u$ and $v$, and formulate a (non-linear) minimization program to capture the minimum number of possible
successful edges. Finally, we analyze the optimal solution of this minimization program, which allows us to achieve the target lower bound
on the expected increment in the matching size.

 % en route to find the distribution of edges
 % in $B$ over the vertices of $U$

\section{Preliminaries}\label{sec:prelim}

\paragraph{Notation.} 
For a graph $G(V,E)$, $n$ denotes the number of vertices in $G$. For any $U \subseteq V$, we use $G[U]$ to denote the subgraph of $G$
induced only on vertices in $U$, and use $E[U]$ to denote the set of edges in $G[U]$, i.e., the set of edges with both end points in
$U$. For any two subsets $U,W$ of $V$, we further use $E[U,W]$ to denote the set of edges with one end point in $U$ and another in $W$. For
any $X \subseteq E$, we use $V(X)$ to denote the set of vertices incident on $X$.  Finally, we use $\mu(E)$ to denote the maximum matching
size among a set of edge $E$.

When sampling from a set of edges $X$ (resp. a graph $H$) where each edge in $X$ (resp. $H$) is sampled w.p. $p$, we use $X_p$ (resp. $H_p$)
to denote the random variable for the set of sampled edges. We use $\OPT(G)$ (or shortly $\OPT$ if the graph $G$ is clear from the context)
to denote the \emph{expected} maximum matching size of a realization of $G$ (i.e., $G_p(V,E_p)$)\footnote{We assume $\opt = \omega(1)$ to
  obtain the desired concentration bounds (for example in Lemma~\ref{lem:basic-alg}). }. For any algorithm for the stochastic
matching problem, we use $\ALG$ to denote the expected matching size in a realization of $H$, where $H$ is the subgraph computed by the
algorithm.

% This assumption essentially says that the expected
%   matching size in the realized graph is bounded from below by a sufficiently large constant.

%%\paragraph{Augmenting paths.}  %The following claim is standard (see, e.g.,~\cite{HopcroftK73}).
%%\begin{claim}\label{clm:augmenting-paths-3}
%%	For any graph $G(V,E)$ and any two matchings $M_1$ and $M_2$ with $\card{M_1} \geq  2\card{M_2}$, if all edges of $M_1$ are incident on vertices of $M_2$, then the number
%%	of length $3$ augmenting paths of $M_2$  in $M_1 \bigtriangleup M_2$ is $\card{M_2}$. 
%%\end{claim}

%%\begin{claim}\label{clm:augmenting-paths}
%%	For any graph $G(V,E)$ and any two matchings $M_1$ and $M_2$ with $\card{M_1} > \card{M_2}$, the number of augmenting paths of length $\ell$ (for any $\ell \geq 1$) of $M_2$ in $M_1 \bigtriangleup M_2$
%%	is at least $\card{M_1} - \paren{1+\frac{2}{\ell+1}}\cdot \card{M_2}$. 
%%\end{claim}

\paragraph{b-matchings.} For any graph $G(V,E)$ and any integer $b \geq 1$, a subset $M \subseteq E$ is called a \emph{simple $b$-matching}, iff the number of edges $M$ that are incident on each
vertex is at most $b$. Throughout, we drop the word `simple', and refer to $M$ as a $b$-matching. 

We use the following characterization of the maximum $b$-matching size in general graphs (see~\cite{COBook}, Volume A, Chapter 33). 
\begin{theorem}\label{thm:b-matching-char}
	Let $G(V,E)$ be a graph and $b \geq 1$ be any integer. The maximum size of a $b$-matching is equal to the minimum value of 
	\begin{align*}
		b \cdot \card{U} + \card{E[W]} + \sum_{K} \floor{\frac{1}{2}\Paren{b\cdot \card{K} + \card{E[K,W]}}}
	\end{align*}
	taken over all disjoint subsets $U,W$ of $V$, where $K$ ranges over all connected components in the graph $G[V - U - W]$. 
\end{theorem}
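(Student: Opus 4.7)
The statement is the classical Edmonds-type characterization of the maximum simple $b$-matching (cited here from Schrijver). My plan is to prove the two inequalities separately: first weak duality (every $b$-matching has size at most the minimum of the expression), then that the minimum is actually attained.

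For weak duality, fix any $b$-matching $M$ and any disjoint $U,W\subseteq V$, and let $K_1,K_2,\ldots$ be the connected components of $G[V-U-W]$. Because the $K_i$ are connected components, no edge of $G$ runs between distinct $K_i,K_j$, so each edge of $M$ falls into exactly one of three categories: (a) incident to $U$, (b) both endpoints in $W$, or (c) at least one endpoint in some $K_i$, with the other endpoint in that same $K_i$ or in $W$. Category (a) contributes at most $b\card{U}$ edges by the degree bound on vertices in $U$; category (b) contributes at most $\card{E[W]}$ trivially. For category (c), set $a_i=\card{M\cap E[K_i]}$ and $c_i=\card{M\cap E[K_i,W]}$. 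The $b$-matching condition summed over vertices of $K_i$ gives $2a_i+c_i\leq b\card{K_i}$, and trivially $c_i\leq \card{E[K_i,W]}$. Adding yields $2(a_i+c_i)\leq b\card{K_i}+\card{E[K_i,W]}$, so $a_i+c_i\leq \floor{(b\card{K_i}+\card{E[K_i,W]})/2}$. Summing over $i$ gives the weak-duality bound.

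For the matching direction, my plan is to invoke Edmonds' polyhedral description of the simple $b$-matching polytope and take an LP dual. The polytope is described by $0\leq x_e\leq 1$, the degree constraints $\sum_{e\ni v}x_e\leq b$, and the blossom inequalities $\sum_{e\in E[S]}x_e+\sum_{e\in F}x_e\leq \floor{(b\card{S}+\card{F})/2}$ over all $S\subseteq V$ and $F$ contained in the cut $\partial(S)$ with $b\card{S}+\card{F}$ odd. Edmonds proved this system is totally dual integral, so the maximum $b$-matching equals the LP maximum. Taking the LP dual and applying complementary slackness to an integer-valued optimal dual, I would then show that the support can be consolidated into a single set $U$ (tight degree constraints), a single set $W$ together with tight upper bounds $x_e=1$ that account for the $\card{E[W]}$ term, and a family of pairwise disjoint odd sets that, after uncrossing, coincide exactly with the components of $G[V-U-W]$. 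Rewriting the dual objective under this normalization recovers the formula term-by-term and exhibits a $b$-matching achieving the bound.

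The main obstacle is precisely this dual-to-combinatorial translation: showing that a minimum-value dual solution can be chosen so that its supporting odd sets are disjoint from $U,W$ and equal to the connected components of $G[V-U-W]$, rather than some arbitrary laminar collection. Standard uncrossing arguments for matroid-style LPs handle this, but the case analysis on how tight blossoms interact with the sets $U,W$ is delicate. An alternative route is a direct Tutte/Lovász-style inductive proof (by contracting tight blossoms and recursing), which sidesteps LP duality entirely but requires a longer combinatorial analysis; either route yields the same characterization, and since the result is invoked here only as a black-box tool for proving the subsequent $b$-matching lemma, citing Schrijver's exposition is the expedient choice.
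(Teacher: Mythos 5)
Your proposal matches the paper's treatment: the paper gives no proof of this theorem at all, quoting it as a black box from Schrijver (Volume A, Chapter 33), so your decision to cite that source for the attainment direction is exactly what the paper does, and your LP-duality/TDI/uncrossing outline of that direction, while only a sketch, is the standard route behind the cited result. Your weak-duality half is correct and self-contained (the partition of $M$ into edges meeting $U$, edges inside $W$, and edges inside or leaving each component $K_i$, with $2a_i + c_i \leq b\card{K_i}$ and $c_i \leq \card{E[K_i,W]}$ giving the floor bound), which is a small bonus over the paper's bare citation.
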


\paragraph{Useful inequalities.} We also use the following simple inequalities. The Proofs are provided in Appendix~\ref{app:prelim} for completeness.

\begin{proposition}\label{prop:upper-exp}
  Let $f(x) := {1 - e^{-x} \over x}$. Then, for any $c$, and any $x \in [0, c]$, $e^{-x} \le 1 - f(c) \cdot x$.
\end{proposition}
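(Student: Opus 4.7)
The plan is to prove the inequality by a direct appeal to the convexity of the exponential function $e^{-x}$ on $[0,c]$, which bypasses any calculus-heavy manipulation. Specifically, I would write any $x\in[0,c]$ as the convex combination $x=(1-t)\cdot 0 + t\cdot c$, where $t = x/c \in [0,1]$. Convexity of $e^{-x}$ then yields
\begin{equation*}
e^{-x} \;\le\; (1-t)\cdot e^{0} + t\cdot e^{-c} \;=\; 1 - \frac{x}{c}\bigl(1 - e^{-c}\bigr) \;=\; 1 - f(c)\cdot x,
\end{equation*}
which is exactly the claimed bound. The case $c=0$ is vacuous (or handled by taking the limit $f(0)=1$, which recovers the standard inequality $e^{-x}\le 1-x$... wait, that's wrong direction; really $c=0$ gives an empty interval except $x=0$, where the inequality is the equality $1\le 1$).

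As a sanity check, one can verify the inequality is tight at both endpoints of $[0,c]$: at $x=0$ both sides equal $1$, and at $x=c$ the right-hand side is $1-(1-e^{-c})=e^{-c}$, matching the left-hand side. This is precisely what one should expect, since the right-hand side $1-f(c)\cdot x$ is the secant line of $y=e^{-x}$ through the points $(0,1)$ and $(c,e^{-c})$, and the convexity argument above is simply the statement that a convex function lies below its chord on the corresponding interval.

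An equivalent (and essentially identical) way to phrase the argument, should one prefer to avoid invoking convexity as a black box, is to define $g(x) := 1 - f(c)\cdot x - e^{-x}$ and observe that $g(0)=g(c)=0$ while $g''(x) = -e^{-x} < 0$, so $g$ is strictly concave on $[0,c]$ and therefore nonnegative on this interval. I do not anticipate any technical obstacle here; the statement is a one-line consequence of convexity, and the only minor care needed is in the degenerate case $c = 0$, which can be noted in passing.
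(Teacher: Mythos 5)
Your proof is correct, and it takes a genuinely (if mildly) different route from the paper's. The paper proves the proposition by showing that $f(x) = \frac{1-e^{-x}}{x}$ is monotonically decreasing --- computing $\frac{df}{dx}$ and bounding it via $1+x \le e^{x}$ --- and then concluding from $x \le c$ that $f(c) \le f(x) = \frac{1-e^{-x}}{x}$, which rearranges to $e^{-x} \le 1 - f(c)\cdot x$ (with $x=0$ being the trivial equality case). You instead observe that $1 - f(c)\cdot x$ is exactly the secant line of the convex function $e^{-x}$ through $(0,1)$ and $(c,e^{-c})$, so the claim is just ``a convex function lies below its chord''; your fallback argument via concavity of $g(x) = 1 - f(c)\cdot x - e^{-x}$ with $g(0)=g(c)=0$ is an equally valid rephrasing. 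The two proofs are really two faces of the same fact, since $f$ is the secant-slope function of $e^{-x}$ based at $0$ and monotonicity of secant slopes is equivalent to convexity; what your version buys is that it avoids the quotient-rule derivative computation entirely and makes the tightness at both endpoints transparent, while the paper's version establishes the monotonicity of $f$ explicitly (a property it does not, however, reuse elsewhere). Your handling of the degenerate case $c=0$ is acceptable as stated, but the mid-sentence self-correction about $f(0)$ should be cleaned up in a final write-up: for $c=0$ the only admissible point is $x=0$, where the claim is trivial, and $f(0)$ need not be assigned a value at all.
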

%%\begin{proof}
%%	We first have $f(x)$ is \emph{monotonically decreasing}, since, 
%%	\begin{align*}
%%		\frac{d f}{d x} = \frac{e^{-x} \cdot x - 1+e^{-x}}{x^2} = \frac{(x+1)\cdot e^{-x} - 1}{x^2} \leq \frac{e^{x}\cdot e^{-x} - 1}{x^2} = 0 
%%	\end{align*}
%%	where we used the inequality $(1+x) \leq e^{x}$. 
%%	
%%  Consequently, since $x \le c$, 
%%  \begin{align*}
%%    f(c) \le f(x) = {1 - e^{-x} \over x}
%%  \end{align*}
%%  which implies $e^{-x} \le 1 - f(c) \cdot x$.
%%\end{proof}

\begin{proposition}\label{prop:upper-exp-2}
  For any $x \in (0, 0.43]$, $(1-x)^{{1\over x}} \ge {1 - x \over e}$. \footnote{This inequality actually holds for any $x \in [0,1]$. However, as we only need the range $(0,0.43]$ 
  in our proofs and this allows us to provide a simpler proof, we only consider this range.}
\end{proposition}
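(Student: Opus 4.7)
The plan is to reduce the claim to the elementary bound $e^{y}\ge 1+y$ by two straightforward manipulations. Since $1-x>0$ throughout the range, I would first take natural logarithms of both sides of $(1-x)^{1/x}\ge (1-x)/e$ and multiply through by $x>0$ to obtain the equivalent inequality $(1-x)\ln(1-x)+x\ge 0$. This removes the awkward fractional exponent and the factor of $e$, putting the claim in a clean one-sided form.

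Next I would introduce the substitution $y=-\ln(1-x)$, so that $1-x=e^{-y}$ and $x=1-e^{-y}$, with $y\ge 0$ throughout. A short algebraic computation rewrites $(1-x)\ln(1-x)+x$ as $1-(1+y)e^{-y}$, so nonnegativity of this quantity is the same as $(1+y)e^{-y}\le 1$, i.e., $e^{y}\ge 1+y$. This is the standard Taylor-series lower bound on the exponential, valid for every $y\ge 0$, so the inequality follows immediately.

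As a sanity check and as an alternative route that avoids the substitution, I would note that expanding $\ln(1-x)=-\sum_{k\ge 1}x^{k}/k$ and collecting terms yields
\[
(1-x)\ln(1-x)+x \;=\; \sum_{k\ge 2}\frac{x^{k}}{k(k-1)},
\]
which is a sum of nonnegative quantities on $[0,1)$. Either route makes clear that the inequality in fact holds for every $x\in[0,1)$, so the restriction to $(0,0.43]$ stated in the proposition is merely cosmetic (consistent with the footnote). I do not foresee any real obstacle; the only thing to be careful about is that taking logarithms is justified (which it is, since both sides are positive on $(0,0.43]$) and that the substitution $y=-\ln(1-x)$ is defined on the same range.
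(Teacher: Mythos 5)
Your proof is correct, and it takes a genuinely different (and cleaner) route than the paper's. The paper rewrites the claim as $\bigl(\tfrac{1}{x}-1\bigr)\ln(1-x)\ge -1$ and then plugs in the truncated-series bound $\ln(1-x)\ge -x-\tfrac{x^2}{2}-\tfrac{x^3}{2}$, which is only valid on a restricted range (this is precisely why the statement is capped at $0.43$, and the paper asserts that bound without proof). You instead reduce the claim to $(1-x)\ln(1-x)+x\ge 0$ and finish either by the substitution $y=-\ln(1-x)$, landing on the elementary inequality $e^{y}\ge 1+y$, or by the full series identity $(1-x)\ln(1-x)+x=\sum_{k\ge 2}\frac{x^{k}}{k(k-1)}$, both of which I checked and which are valid on all of $[0,1)$. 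What your approach buys is self-containment (no ad hoc logarithm truncation to justify) and generality: it proves the inequality on the whole interval, matching the paper's footnote, whereas the paper's argument is a shorter computation only once its unproved truncation bound and the range restriction are accepted. Both routes are fine for the paper's purposes, since the application only needs $x\le 0.43$.
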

%%\begin{proof}
%%  We first exam the equivalent conditions for the target inequality.
%%  \begin{align*}
%%    &                             &(1-x)^{{1\over x}} \ge {1 - x \over e} \\ 
%%    &\Longleftrightarrow &     (1-x)^{{1\over x} - 1} \ge {1 \over e} \\
%%    &\Longleftrightarrow &     ({1\over x} - 1)\ln(1-x) \ge -1 \tag{by taking natural log of both sides}
%%  \end{align*}
%%  Now, since $\ln(1 - x) \ge -x - {x^2 \over 2} - {x^3 \over 2}$ when $x \in (0,0.43]$. We have 
%%  \begin{align*}
%%    ({1\over x} - 1)\ln(1-x) &\ge ({1\over x} - 1) (-x - {x^2 \over 2} - {x^3 \over 2}) \tag{since $({1\over x} - 1) > 0$} \\
%%    & = -1 - {x \over 2} - {x^2 \over 2} + x + {x^2 \over 2} + {x^3 \over 2} \\
%%    & = - 1 + {x \over 2} + {x^3 \over 2} \\
%%    & \ge -1
%%  \end{align*}
%%  which completes the proof.
%%\end{proof}

\subsection{\basicalg Algorithm}\label{sec:tools}

We use the $(0.5-\eps)$-approximation algorithm of~\cite{ECpaper} (Algorithm~3) as a sub-routine. For simplicity, throughout the paper, we
refer to this algorithm as \basicalg. In the following lemma, we summarize the properties of \basicalg that we use in this paper. The proof
of this lemma immediately follows from Lemma~3.9 and Lemma~5.2 in~\cite{ECpaper}.

\begin{lemma}[\cite{ECpaper}]\label{lem:basic-alg}
  For any graph $G(V,E)$, and any input parameter $\eps > 0$, $\basicalg(G,\eps)$ outputs a collection of $R$ matchings $M_1, M_2, \ldots, M_R$
  (denote $E_{MC} = M_1 \cup M_2 \cup \ldots\cup M_R$), such that, w.p. $1-o(1)$:
	\begin{enumerate}
		\item The size of a maximum matching among realized edges in $E_{MC}$ is at least $\paren{1-\eps} \card{M_R}$.
		\item $\card{M_1} \geq \ldots \geq \card{M_R} \geq (1-\eps)\cdot\mu\paren{E \setminus E_{MC}}$.
		\item $R = \Theta(\frac{\log{1/(\eps p)}}{\eps p})$.
	\end{enumerate}
\end{lemma}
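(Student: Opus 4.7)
The plan is to assemble the three guarantees by invoking Lemma~3.9 and Lemma~5.2 of \cite{ECpaper} on top of a direct analysis of the iterative structure of \basicalg. Recall that, after an initial sparsification step, \basicalg repeatedly picks a maximum matching $M_i$ in the remaining graph $E \setminus (M_1 \cup \cdots \cup M_{i-1})$ and removes it, terminating after $R$ rounds and returning $E_{MC} = M_1 \cup \cdots \cup M_R$. My proof would isolate the contribution of each cited result to one of the three listed guarantees and then check compatibility of parameters.

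Parts~2 and~3 are essentially bookkeeping. Part~3 follows immediately from the termination rule, by setting $R := \Theta(\log(1/(\eps p))/(\eps p))$. The monotonicity $|M_1| \ge \cdots \ge |M_R|$ in Part~2 holds because $M_i$ is taken as a maximum matching in a supergraph of the graph from which $M_{i+1}$ is extracted, so any matching available in the later round was already available earlier. The trailing inequality $|M_R| \ge (1-\eps)\,\mu(E \setminus E_{MC})$ is the content of Lemma~5.2 of \cite{ECpaper}: by maximality, $|M_R|$ already dominates $\mu$ restricted to the graph from which $M_R$ was extracted, and a standard concentration argument on the initial sparsification step shows that at most an $\eps$-fraction of any fixed matching in $E \setminus E_{MC}$ can be destroyed by that sparsification, which is exactly the claimed slack factor.

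The real substance is Part~1, which I would derive by applying Lemma~3.9 of \cite{ECpaper} as a black box to the chain $M_1, \ldots, M_R$. That lemma states that, with probability $1 - o(1)$, a $p$-realization of the union of such a chain contains a matching of size at least $(1-\eps)|M_R|$. Its proof bypasses the augmenting-path reasoning used in earlier work and instead uses the Tutte--Berge formula globally: it expresses the deficiency of a maximum matching in $(E_{MC})_p$ as a sum over odd components of induced subgraphs and shows that whenever the deficiency exceeds $\eps |M_R|$, each subsequent $M_i$ supplies $\Omega(|M_R|)$ edges crossing those components, each of which is realized independently with probability $p$; $R = \Theta(\log(1/(\eps p))/(\eps p))$ rounds then suffice to drive the deficiency below $\eps |M_R|$ with high probability via a union bound over the relevant structural choices.

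The hard part of the argument --- the global Tutte--Berge-based deficiency analysis underlying Part~1 --- is already handled in \cite{ECpaper}, so for the present paper I would treat \basicalg as a black box with the three stated guarantees. The only remaining obligation when assembling Lemma~\ref{lem:basic-alg} is to verify that the two cited lemmas of \cite{ECpaper} use a compatible parameter setting for $R$ and the same sparsification primitive as the outer algorithm; this is entirely routine.
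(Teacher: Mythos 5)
Your proposal takes the same route as the paper: the paper gives no standalone proof and simply states that the lemma follows immediately from Lemma~3.9 and Lemma~5.2 of~\cite{ECpaper}, which is exactly the black-box invocation you make (your added sketch of the Tutte--Berge deficiency argument and the monotonicity/parameter bookkeeping is consistent with how those cited results are used). So the proposal is correct and matches the paper's approach.
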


We can also prove the following simple claim based on the second property of the \basicalg in Lemma~\ref{lem:basic-alg}.  Roughly speaking, this claim states that if the \basicalg is not able to 
extract any further large matching (of size essentially $\opt/2$) from $G$, then the set of extracted edges already provides a matching of size $\opt/2$ in any realization. 
A similar result is proven in~\cite{ECpaper} (see Lemma~5.3); however, since Claim~\ref{clm:small-L} does not follow directly from the results in~\cite{ECpaper}, we
provide a self-contained proof of this claim here. 

\begin{claim}\label{clm:small-L}
	Fix $0 < \eps < \delta < 1$. Let $G(V,E)$ be a graph, $X$ be any arbitrary subset of $E$, and $(M_1,\ldots,M_R) = \basicalg(G(V,E\setminus X),\eps)$. 
	Define $E_{MC} = M_1 \cup \ldots \cup M_R$. 	If $\card{M_R} \leq \paren{\frac{1}{2}-\delta} \OPT$, then the expected maximum matching size in a realization
	of $G(V, X \cup E_{MC})$ is at least $\paren{\frac{1}{2} + \delta - \eps} \OPT$. 
\end{claim}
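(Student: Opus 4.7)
The plan is a direct matching-decomposition argument. I would fix a realization $G_p$ of $G$ and let $\Mstar$ denote a maximum matching in $G_p$, so that $\Ex{|\Mstar|} = \OPT$. I would then split $\Mstar = \Mstar_1 \sqcup \Mstar_2$ where $\Mstar_1 := \Mstar \cap (X \cup E_{MC})$ and $\Mstar_2 := \Mstar \setminus (X \cup E_{MC})$. Since \basicalg is invoked on $G(V, E \setminus X)$, we have $E_{MC} \subseteq E \setminus X$, so every edge of $\Mstar_2$ lies in $(E \setminus X) \setminus E_{MC}$; being a sub-matching of $\Mstar$, this yields $|\Mstar_2| \le \mu\paren{(E \setminus X) \setminus E_{MC}}$.

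The key estimate comes from Property~(2) of Lemma~\ref{lem:basic-alg}: applied to the output of $\basicalg(G(V, E\setminus X), \eps)$, it gives $|M_R| \ge (1-\eps)\,\mu\paren{(E \setminus X) \setminus E_{MC}}$. Combining this with the hypothesis $|M_R| \le (\tfrac{1}{2} - \delta)\OPT$ produces the deterministic bound
\begin{align*}
|\Mstar_2| \;\le\; \mu\paren{(E \setminus X) \setminus E_{MC}} \;\le\; \frac{(\tfrac{1}{2} - \delta)\OPT}{1-\eps}.
\end{align*}

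To finish, I would observe that $\Mstar_1$ is a realized matching contained entirely in $X \cup E_{MC}$, so pointwise the maximum matching in the realization of $G(V, X \cup E_{MC})$ is at least $|\Mstar_1| = |\Mstar| - |\Mstar_2|$. Taking expectations (the bound on $|\Mstar_2|$ is deterministic in the realization of $G$) gives
\begin{align*}
\Ex{\mu\paren{\text{realization of } X \cup E_{MC}}} \;\ge\; \OPT - \frac{(\tfrac{1}{2}-\delta)\OPT}{1-\eps} \;=\; \frac{(\tfrac{1}{2} + \delta - \eps)\OPT}{1-\eps} \;\ge\; \paren{\tfrac{1}{2} + \delta - \eps}\OPT,
\end{align*}
where the final step uses $0 < \eps < \delta$ (so the numerator is positive) together with $1-\eps \le 1$.

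I do not foresee any substantive obstacle. The one point demanding care is that Property~(2) of Lemma~\ref{lem:basic-alg} is applied with edge set $E \setminus X$, so the relevant quantity is $\mu$ over $(E \setminus X) \setminus E_{MC}$ rather than $E \setminus E_{MC}$; this matches exactly what the decomposition of $\Mstar$ produces and is why the contribution of $X$ survives additively through $\Mstar_1$.
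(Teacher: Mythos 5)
Your proposal is correct and is essentially the paper's own argument: fix the maximum matching of each realization, split it into edges inside and outside $X \cup E_{MC}$, bound the outside portion by $\mu\paren{(E\setminus X)\setminus E_{MC}}$ via Property~(2) of Lemma~\ref{lem:basic-alg}, and take expectations. If anything, your handling of the approximation factor is slightly more careful than the paper's, which shortcuts $\mu\paren{E\setminus(X\cup E_{MC})} \le (1+\eps)\card{M_R}$ where the lemma literally gives $\card{M_R}/(1-\eps)$; your version with the exact $\tfrac{1}{1-\eps}$ factor reaches the same bound cleanly.
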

\begin{proof}
For each realization of $G_p$, we fix one maximum matching. Now the expected matching size in $G_p$
  can be written as
  \begin{align*}
    {\OPT} = \sum_M\prob{\text{$M$ is the fixed maximum matching in $G_p$}} \cdot \card{M}
  \end{align*}
  By property~(2) of \basicalg in Lemma~\ref{lem:basic-alg}, the maximum matching size in the graph $G(V,E \setminus (X \cup E_{MC}))$ is at most $(1+\eps)\card{M_R}$. 
  Therefore, for any matching $M$, at most $(1+\eps)\card{M_R}$ edges of $M$ is in $E \setminus (X \cup E_{MC})$, and hence at least $\card{M} - (1+\eps)L$ edges
  of $M$ is in $X \cup E_{MC}$. This implies that if all edges in $M$ are
  realized, a matching of size at least $\card{M} - (1+\eps)\card{M_R}$ is realized in $Q$. Let $\ALG$ be the expected maximum matching size in $G(V,X \cup E_{MC})$; we have,  
  \begin{align*}
  \ALG &\ge \sum_M\prob{\text{$M$ is the fixed maximum matching in $G_p$}} (\card{M} - (1+\eps)\card{M_R})\\
                  &={\OPT} - (1+\eps)\card{M_R}
  \end{align*}
  Since $\card{M_R} \le (1/2 - \delta){\OPT}$, we have, 
  \[ \ALG \ge {\OPT} - (1+\eps)\cdot(1/2 - \delta) \cdot {\OPT} \geq (1/2 + \delta - \eps) \cdot {\OPT}\]
  which concludes the proof.
\end{proof}

\newcommand{\xstar}{x^{\star}}
\section{$b$-Matching Lemma}\label{sec:b-matching}

Here, we develop one of the main ingredients of our algorithm, namely, any input graph $G$ contains a $b$-matching of size almost
$b \cdot \OPT(G)$ for $b = 1/p$.  Intuitively, if the expected matching size in $G$ is $\OPT$, then since only $p$ fraction of edges are
realized in expectation, one may hope to find up to $1/p$ edge-disjoint matchings of size \OPT in $G$.  The following lemma formalizes this
intuition by using $b$-matchings (for $b = 1/p$) instead of a collection of edge-disjoint matchings.

\begin{lemma}[$b$-matching lemma]\label{lem:b-matching}
	Let $b = \floor{\frac{1}{p}}$; any graph $G(V,E)$ has a $b$-matching of size at least $(b-1)\cdot\opt (G)$.  
\end{lemma}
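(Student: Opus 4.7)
The plan is to invoke Theorem~\ref{thm:b-matching-char} (the min--max characterization of $b$-matching size) and argue that every expression this minimum is taken over is already at least $(b-1)\opt$. Concretely, I would fix arbitrary disjoint $U,W \subseteq V$, let $K_1,\ldots,K_t$ be the connected components of $G[V-U-W]$, and aim to prove
\[
b\,\card{U} + \card{E[W]} + \sum_{i=1}^t \floor{\tfrac{1}{2}\bigl(b\card{K_i} + \card{E[K_i,W]}\bigr)} \;\geq\; (b-1)\,\opt.
\]
Taking the minimum over $(U,W)$ on the left then yields the lemma.

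To bring $\opt$ into play, I would apply the $b=1$ instance of Theorem~\ref{thm:b-matching-char} to each realization $G_p$ with the same $U,W$ and the same partition $\{K_i\}$ of $V-U-W$. Since the $K_i$ share no $G$-edges with each other, they share none in $G_p$ either, so a component-by-component application (followed by the elementary inequality $\sum_j \floor{x_j/2} \leq \floor{(\sum_j x_j)/2}$ to collapse any $G_p$-refinement of the $K_i$ back to $\{K_i\}$) gives, for every realization,
\[
\mu(G_p) \;\leq\; \card{U} + \card{E_p[W]} + \sum_{i=1}^t \floor{\tfrac{1}{2}\bigl(\card{K_i} + \card{E_p[K_i,W]}\bigr)}.
\]
Dropping the floors and taking expectation, using $\Ex{\card{E_p[X,Y]}} = p\card{E[X,Y]}$, produces the clean bound
\[
\opt \;\leq\; \card{U} + p\card{E[W]} + \tfrac{1}{2}\sum_{i=1}^t \bigl(\card{K_i} + p\card{E[K_i,W]}\bigr).
\]

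Finally, I would multiply through by $b-1$ and compare term-by-term with the target inequality. For the $U$ and $W$ terms, the bounds $(b-1)\card{U} \leq b\card{U}$ and $(b-1)p\card{E[W]} \leq \card{E[W]}$ are immediate; the second uses $bp \leq 1$, which follows from $b = \floor{1/p}$. For each component, I would establish
\[
\tfrac{b-1}{2}\bigl(\card{K_i} + p\card{E[K_i,W]}\bigr) \;\leq\; \tfrac{1}{2}\bigl(b\card{K_i} + \card{E[K_i,W]}\bigr) - \tfrac{1}{2} \;\leq\; \floor{\tfrac{1}{2}\bigl(b\card{K_i} + \card{E[K_i,W]}\bigr)},
\]
where the first step uses $\card{K_i} \geq 1$ together with $(b-1)p \leq 1$, and the second uses that $b\card{K_i}+\card{E[K_i,W]}$ is a nonnegative integer. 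The main obstacle is precisely this floor: naively paying $\tfrac{1}{2}$ per component across all $t$ of them could dominate $\opt$ if $t$ were allowed to be large. What makes the bookkeeping work is that each nonempty component supplies exactly the $\card{K_i}/2 \geq 1/2$ of slack needed to absorb one floor, component by component.
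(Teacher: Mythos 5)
Your proof is correct, and it takes a genuinely different (and more direct) route than the paper's. The paper argues by contradiction: assuming the maximum $b$-matching is smaller than $(b-1)\opt$, it extracts a violating pair $U,W$ from Theorem~\ref{thm:b-matching-char}, first shows the number of components of $G[V-U-W]$ is less than $2\opt$, and then bounds realizations \emph{without} ever invoking the characterization again --- instead it uses a bespoke counting claim lower-bounding the number of $W$-vertices left unmatched in any realization by $\card{W}-\card{U}-2\card{E_p[W]}-\card{E_p[T,W]}$, and contradicts $\Ex{\xstar}\le n-2\opt$. You instead prove the min-side certificate is large for \emph{every} $U,W$ by applying the $b=1$ case of Theorem~\ref{thm:b-matching-char} (a Tutte--Berge-type bound) to each realization $G_p$ with the same $U,W$, collapsing the refined components of $G_p[V-U-W]$ back to the $K_i$ via $\sum_j\floor{x_j/2}\le\floor{(\sum_j x_j)/2}$, taking expectations, and comparing term by term using $bp\le 1$ and the per-component slack $\card{K_i}/2\ge 1/2$ to absorb each floor. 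Both arguments hinge on the same characterization for $G$ and on linearity of expectation over a fixed $U,W$; what your version buys is a direct, contradiction-free proof with no need for the auxiliary bound on the number of components, at the mild cost of invoking the characterization a second time (at $b=1$) inside realizations and handling the component-refinement/floor bookkeeping explicitly, which the paper's deficiency-counting claim sidesteps. All the steps you outline check out, including the term-by-term inequality $\tfrac{b-1}{2}\paren{\card{K_i}+p\card{E[K_i,W]}}\le\floor{\tfrac{1}{2}\paren{b\card{K_i}+\card{E[K_i,W]}}}$, so the proposal stands as a valid alternative proof.
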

\begin{proof}
	Suppose by contradiction that the maximum $b$-matching $B$ in $G$ is of size less than $(b-1)\cdot\opt$. Consequently, by Theorem~\ref{thm:b-matching-char}, there exist disjoint subsets $U,W$ of $V$ such that,
	\begin{align}
		b \cdot \card{U} + \card{E[W]} + \sum_{K} \floor{\frac{1}{2}\Paren{b\cdot \card{K} + \card{E[K,W]}}} < (b-1)\cdot\opt \label{eq:b-matching-1}
	\end{align}
	where $K$ ranges over all connected components in the graph $G[V - U - W]$. Let $c$ be the number of connected components in $G[V-U-W]$. We first note that $c < 2\opt$; otherwise,
	\begin{align*}
		\sum_{K} \floor{\frac{1}{2}\Paren{b\cdot \card{K} + \card{E[K,W]}}} &\geq  \sum_{K} \floor{\frac{b}{2}} \geq c \cdot \paren{\frac{b-1}{2}} \geq (b-1)\cdot\opt 
	\end{align*}
	and hence the LHS in Eq~(\ref{eq:b-matching-1}) would be more than $(b-1) \cdot \opt$, i.e., the RHS; a contradiction. 
	
	Additionally, we have $\card{U} + \card{W} + \sum_{K} \card{K} = n$. Hence, by multiplying each side in Eq~(\ref{eq:b-matching-1}) by $2$ and plugging in this bound, we have,
	\begin{align*}
		2b\cdot\opt-2\opt &>  nb - b\card{W} + b\card{U} + 2\card{E[W]} + \sum_{K} \paren{\card{E[K,W]} - 1} \\
		&\geq   nb - b\card{W} + b\card{U} + 2\card{E[W]} + \sum_{K} \card{E[K,W]} - 2\opt 
	\end{align*}
	Let $T := V \setminus (U \cup W)$, i.e., the set of vertices in connected components $K$. Using this notation, we can write the above equation simply as, 
	\begin{align}
		b\cdot\card{W} - b\cdot\card{U} - 2\card{E[W]} -  \card{E[T,W]} > b \cdot \paren{n-2\opt} \label{eq:b-matching-2}
	\end{align}
	Now consider the partition $T,U,W$ in a \emph{realized} graph $G(V,E_p)$. Let $E_p[W]$ and $E_p[T,W]$ denote, respectively, the set of edges in $E[W]$ and $E[T,W]$ after sampling the edges 
	w.p. $p$. For any matching $M$ in $G_p$, define $x(M)$ to be the number of \emph{unmatched} vertices (by $M$) in $W$. Finally, define $\xstar := \min_{M} x(M)$, where the minimum is taken over all 
	matchings in $G_p$. Clearly, $\xstar$ is a random variable depending on the choice of edges in $G_p$. We have the following simple claim. 
	\begin{claim}\label{clm:x-size}
		For any realization $G_p$, $\xstar \geq \card{W}-\card{U} - 2\card{E_p[W]} - \card{E_p[T,W]}$. 
	\end{claim}
	\begin{proof}
		Consider the set of vertices in $W$. At most $\card{U}$ vertices of $W$ can be matched to vertices in $U$. Additionally, any edge in $E_p[W]$ can further reduce the number
		of unmatched vertices in $W$ by at most $2$. Finally, any edge in $E_p[T,W]$ can reduce the number of remaining unmatched vertices in $W$ by at most $1$. 
	\end{proof}
	Using the fact that $\Ex{\xstar} \leq n-2\opt$, we have,  
	\begin{align*}
		b\cdot (n-2\opt) &\geq b\cdot\Ex{\xstar} \\
		&\geq b\cdot\card{W} - b\cdot\card{U} - b\cdot\EX{2\card{E_p[W]} + \card{E_p[T,W]}} \tag{by Claim~\ref{clm:x-size}}\\
		&= b\cdot\card{W} - b\cdot\card{U} - pb \cdot \paren{2\card{E[W]} + \card{E[T,W]}} \\
		&\geq b\cdot\card{W} - b\cdot\card{U} - {2\card{E[W]} - \card{E[T,W]}} \tag{since $pb = p \floor{\frac{1}{p}} \leq 1$} \\
		&> b \cdot (n-2\opt) \tag{by Eq~(\ref{eq:b-matching-2})}
	\end{align*}
	a contradiction. 
\end{proof}

We further prove that the bound established in Lemma~\ref{lem:b-matching} is essentially tight (see Appendix~\ref{app:b-limitation}). 
\begin{claim}\label{clm:b-match-opt} 
  For any constant $0 < p < 1$, there exist bipartite graphs $G$ where $G_p$ has a matching of size $n-o(n)$ in expectation, but for any
  $b \geq {2 \over p}$, there is no $b$-matching in $G$ with (at least) $b\cdot 0.99n$ edges; here $n$ is the number of
  vertices on each side of $G$.
\end{claim}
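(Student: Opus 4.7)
The plan is to construct an explicit family of bipartite graphs $G = (L, R, E)$ with $|L| = |R| = n$ and verify both properties in turn. Fix a constant $\beta \in (0, 1/2)$ (for concreteness, take $\beta = 0.3$). Partition $L = Y_L \sqcup L'$ with $|Y_L| = \beta n$ and $|L'| = (1 - \beta) n$, and symmetrically $R = Y_R \sqcup R'$. The edge set of $G$ consists of: all edges between $Y_L$ and $R$; all edges between $L'$ and $Y_R$; and a sparse bipartite subgraph on $L' \times R'$ in which each $r \in R'$ has exactly $d$ neighbors in $L'$, for a parameter $d = d(p)$ to be calibrated. The sparse part will be taken to be a random bipartite $d$-regular graph on the $R'$ side, so that it behaves as an expander with high probability.

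To verify the $b$-matching upper bound, I would invoke the LP-dual characterization of bipartite $b$-matchings: for any $S \subseteq L$ and $T \subseteq R$, the maximum $b$-matching is at most $b(|S| + |T|) + |E(L \setminus S, R \setminus T)|$. Instantiating with $(S, T) = (Y_L, Y_R)$ gives the bound $2\beta b n + d(1-\beta) n$, and choosing $d$ so that $d(1 - \beta) \leq (0.99 - 2\beta) \cdot 2/p$ makes this quantity at most $0.99 b n$ when $b = 2/p$. Since the resulting inequality only strengthens as $b$ grows (with $d, n$ fixed), the $b$-matching upper bound holds uniformly over every $b \geq 2/p$.

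To verify $\OPT(G) \geq n - o(n)$, I would apply Hall's theorem to $G_p$ and show with high probability that $|N_{G_p}(X)| \geq |X| - o(n)$ for every $X \subseteq L$. For any $X$ intersecting $Y_L$, the complete $Y_L \times R$ edges together with standard concentration yield $|N_{G_p}(X)| = n - o(n)$. For $X \subseteq L'$ with $|X| \leq \beta n$, the complete $L' \times Y_R$ edges give $|N_{G_p}(X) \cap Y_R| \geq \beta n - o(n) \geq |X|$. The essential case is $X \subseteq L'$ with $|X| > \beta n$, in which $|N_{G_p}(X) \cap R'|$ must supply at least $|X| - \beta n - o(n)$ vertices; by the expander property of the sparse part and concentration of $|N_{G_p}(X) \cap R'|$ around its mean $(1-\beta) n \bigl(1 - (1 - p|X|/|L'|)^d\bigr)$, this lower bound holds with $\Omega(n)$ slack provided we choose $d$ so that $(1-p)^d \leq \beta/(1-\beta) - \eta$ for a small constant $\eta > 0$.

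The main obstacle is to satisfy both constraints on $d$ simultaneously for every $p \in (0, 1)$. The $b$-matching bound requires $d \leq 2(0.99 - 2\beta)/((1-\beta) p)$, while the Hall bound requires $d \geq \log((1-\beta)/\beta) / \log(1/(1-p))$. Both expressions are of order $\Theta(1/p)$, and for $\beta$ bounded away from $1/2$ (such as $\beta = 0.3$) the constant in the upper bound is strictly larger than that in the lower bound (for $\beta = 0.3$, approximately $1.11$ versus $0.85$ in the small-$p$ limit), so the interval of valid integer values of $d$ is non-empty. The inequality $(1-p)^{1/p} \leq e^{-1}$ (Proposition~\ref{prop:upper-exp-2}) extends this comparison of constants to every $p \in (0, 1)$, completing the outline.
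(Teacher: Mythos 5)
Your construction and the $b$-matching upper bound are fine (the vertex-cover-style dual bound with $(S,T)=(Y_L,Y_R)$ is correct, and the edge count $d(1-\beta)n$ of the sparse part is deterministic), but the verification that $\OPT(G)=n-o(n)$ — which is the heart of the claim — has a genuine gap. Your Hall argument reduces to the assertion that, with high probability, every $X\subseteq L'$ with $|X|>\beta n$ satisfies $|N_{G_p}(X)\cap R'|\ge |X|-\beta n-o(n)$; by the deficiency form of K\"onig--Hall this is \emph{equivalent} to saying that the realized sparse graph (a random bipartite graph on $L'\times R'$ with constant average degree $c=pd\le 1.114$) has a matching covering a $\tfrac{1-2\beta}{1-\beta}=\tfrac{4}{7}\approx 0.571$ fraction of $R'$. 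A per-set expectation computation plus concentration cannot deliver this: the slack you have is only $\eta n$ for a small constant $\eta$, so Chernoff gives per-set failure probability $e^{-\Theta(\eta^2 n)}$, which does not survive a union bound over the $2^{|L'|}$ sets $X$; more fundamentally, the binding sets are chosen \emph{adaptively} from the realization (e.g.\ left vertices with small or overlapping realized neighborhoods), which is exactly what the fixed-set first-moment bound misses. Indeed, run your argument at average degree $c=1$: it would ``prove'' a matching fraction up to $1-e^{-1}\approx 0.63$, whereas the true value is the Karp--Sipser constant $\approx 0.56$. So the method proves false statements, and whether your particular numbers ($c\le 1.114$ versus required fraction $0.571$) land on the right side of the true sparse-random-graph matching constant is unclear without invoking that machinery.

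This is precisely where the paper's proof differs: it chooses the proportions so that the sparse realized part (which is exactly $\FG_{N,\frac{1}{N}}$, average degree $1$) only needs to supply a matching fraction of roughly $0.515$, comfortably below the Karp--Sipser value $c^\star\approx 0.56$, and it cites Karp--Sipser directly instead of attempting a Hall-type argument; the remaining deficiency is absorbed by perfect matchings inside the complete bipartite blocks, and the $b$-matching bound is a direct edge count. To repair your outline you would either have to quote the exact asymptotic matching number of sparse random bipartite graphs at density $c\approx 1.11$ and verify it exceeds $4/7$ (not obvious), or re-tune $\beta$ closer to $1/2$ so that the required fraction of $R'$ is small enough to be guaranteed by a crude greedy bound (in the spirit of Lemma~\ref{lem:outside-edges}) rather than by a Hall condition.
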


Finally, we establish the following auxiliary lemma. 

\begin{lemma}\label{lem:outside-edges}
  Let $B$ be a $\floor{\frac{1}{p}}$-matching with $\paren{\floor{\frac{1}{p}}\cdot N}$ edges; then, $\EX{\mu(B_p)} \geq (1-3p) \cdot {\frac{N}{3}}$. 
\end{lemma}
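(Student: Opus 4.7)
The plan is to lower bound $\Ex{\mu(B_p)}$ via a random greedy matching on the realization $B_p$. After sampling $B_p$, I would process its edges in a uniformly random order and greedily add each edge to a matching $M$ whenever both of its endpoints are still unmatched; since $|M| \le \mu(B_p)$ always, it suffices to lower bound $\Ex{|M|}$. For any fixed realization and any edge $e = (u,v) \in B_p$, observe that $e$ is certain to enter $M$ whenever $e$ comes first (in the random edge order) among all edges of $B_p$ incident to $u$ or $v$; there are exactly $d_{B_p}(u) + d_{B_p}(v) - 1$ such edges, so $\Pr\paren{e \in M \mid B_p} \ge 1/(d_{B_p}(u) + d_{B_p}(v) - 1)$.

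Next, I would sum this lower bound over $e \in B_p$ and take expectation over the realization. For each original edge $e=(u,v) \in B$, condition on $e \in B_p$: under this conditioning, the leftover degrees $X_u := d_{B_p}(u) - 1$ and $X_v := d_{B_p}(v) - 1$ are \emph{independent} binomial random variables with parameters $(d_u-1, p)$ and $(d_v-1, p)$, where $d_u, d_v$ denote the degrees in $B$; independence is inherited from the mutually independent realizations of the edges of $B$. Since $x \mapsto 1/x$ is convex on $(0,\infty)$, Jensen's inequality yields
\[
  \Ex{\frac{1}{X_u + X_v + 1}} \;\ge\; \frac{1}{\Ex{X_u} + \Ex{X_v} + 1} \;=\; \frac{1}{p(d_u + d_v - 2) + 1}.
\]

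Finally, I would invoke $d_u, d_v \le b = \floor{1/p}$ together with $pb \le 1$ to bound each denominator above by $3 - 2p$, which gives
\[
  \Ex{|M|} \;\ge\; \sum_{e \in B} \frac{p}{3-2p} \;=\; \frac{pbN}{3-2p}.
\]
Combining with $pb \ge 1 - p$ (since $b \ge \tfrac{1}{p} - 1$) and the elementary inequality $(1-p)/(3-2p) \ge (1-3p)/3$, valid for all $p \in (0,1)$ because it reduces to $6p^2 \le 8p$, I conclude $\Ex{\mu(B_p)} \ge (1-3p) N / 3$, as desired.

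I do not foresee a significant technical obstacle in this plan. The one subtlety worth highlighting is that the Jensen step requires $X_u$ and $X_v$ to remain independent after conditioning on $e \in B_p$; this is immediate from the mutual independence of edge realizations, but is essential for the inequality $\Ex{1/(X_u + X_v + 1)} \ge 1/(\Ex{X_u} + \Ex{X_v} + 1)$ to be valid and for the resulting bound to be tight enough to yield the claimed constant $N/3$.
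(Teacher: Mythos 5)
Your proof is correct, but it takes a genuinely different route from the paper's. The paper first decomposes $B$ via Vizing's theorem into $\floor{1/p}+1$ matchings and then grows a matching greedily round by round over these matchings, extracting the bound $(1-3p)\cdot\frac{N}{3}$ from a recursion in which each already-chosen edge blocks at most two edges of every later matching. You instead run the classical random-order greedy directly on the realization $B_p$, use the per-edge guarantee $\Pr\paren{e \in M \mid B_p} \ge 1/(d_{B_p}(u)+d_{B_p}(v)-1)$ (the counting of incident edges is valid because $B$ is simple), and then combine Jensen's inequality with the degree cap $d_u, d_v \le \floor{1/p}$ and $p\floor{1/p} \le 1$ to bound the relevant denominator by $3-2p$; your final algebra ($3(1-p) \ge (1-3p)(3-2p)$ reducing to $6p^2 \le 8p$) checks out, and your intermediate bound $pbN/(3-2p)$ is in fact slightly stronger than the stated $(1-3p)N/3$ for small $p$. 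What your route buys is that it avoids Vizing's theorem entirely and uses only the maximum degree and edge count of $B$, at the price of invoking the random-greedy folklore bound; the paper's decomposition-based argument is more hands-on but needs the edge-coloring step. One small correction to your closing remark: independence of $X_u$ and $X_v$ is not needed for the Jensen step, since $\Ex{1/Y} \ge 1/\Ex{Y}$ holds for any positive random variable $Y = X_u + X_v + 1$ by convexity and linearity of expectation (the independence claim itself is true for a simple graph, it is just not essential).
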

\begin{proof}
	We first partition the edges of $B$ into a collection of matchings. Since the degree of each vertex in $G(V,B)$ is at most $\floor{\frac{1}{p}}$, by Vizing's Theorem~\cite{Vizing64}, we can color
	the edges in $G(V,B)$ with $\floor{\frac{1}{p}}+1$ colors such that no two edges with the same color are incident on a vertex.
	This ensures that $B$ can be decomposed into $R = \floor{\frac{1}{p}}+1$ matchings
	$M_1,\ldots,M_R$. 

	Next, we define the following process. Define $M^{(0)} = \emptyset$; for $i = 1$ to $R$ rounds, let $M^{(i)}$ be a maximal matching
        obtained by adding to $M^{(i-1)}$ the set of realized edges in $M_i$ that are not incident on vertices in $M^{(i-1)}$. Define
        $M := M^{(R)}$. 
        
        We argue that $\Ex{\card{M}} \geq (1-3p) \cdot {\frac{N}{3}}$. To do this, we need the following notation. Define $Y_i$ as a random variable denoting 
	the \emph{set of edges} in $M_i$ that are \emph{not} incident on any vertex of matching $M^{(i-1)}$. Note that $Y_i$ depends only on the realization of edges in $M_1,\ldots,M_{i-1}$ and is \emph{independent} of 
	the realization of $M_i$. Moreover, define $X_i$ as a random variable indicating the \emph{number of edges} in (a realization of )$Y_i$ that are added to $M^{(i-1)}$ (after updating by edges in $M_i$). We first have, 
	\begin{align*}
		\card{Y_i} \geq \card{M_i} - 2 \card{M^{(i-1)}}
	\end{align*}
	since any edge in $M^{(i-1)}$ can be incident on at most two vertices of $M_i$. Moreover, conditioned on any valuation
	for $Y_i$, we have $\Ex{X_i} = p \cdot \card{Y_i}$ since each edge in $M_i$ is realized w.p. $p$, independent 
	of the choice of $Y_i$. Consequently, 
	\begin{align*}
		\Ex{X_i} = p \cdot \Ex{Y_i} \geq p \cdot \paren{\card{M_i} - 2\EX{\card{M^{(i-1)}}}}
	\end{align*}
	We again stress that the expectation for $X_i$ is taken over the choice of edges in $M_i$, while the expectation for $Y_i$ (and $M^{(i-1)}$) is taken over the choice of edges in $M_1,\ldots,M_{i-1}$. 
	We now have, 
	\begin{align*}
		\Ex{\card{M}} &= \sum_{i=1}^R \Ex{X_i} \geq \sum_{i=1}^{R} p \cdot \paren{\card{M_i} - 2\EX{\card{M^{(i-1)}}}} \\
		&\geq p \cdot \paren{\sum_{i=1}^{R} \card{M_i} - 2 \sum_{i=1}^{R} \Ex{\card{M}}} \tag{$\Ex{\card{M}} \geq \EX{\card{M^{(i-1)}}}$} \\
		&\geq p \cdot \paren{\floor{\frac{1}{p}} \cdot N - 2 \paren{\floor{\frac{1}{p}}+1} \cdot \Ex{\card{M}}} \tag{$R = \floor{\frac{1}{p}}+1$}
	\end{align*} 
	This implies that 
	\begin{align*}
		\Ex{\card{M}} \geq (1-3p) \cdot \frac{N}{3}
	\end{align*}
	which concludes the proof.
\end{proof}

\section{Main Algorithm and Analysis}\label{sec:main-alg}

We provide our main algorithm for the stochastic matching problem (when $p$ is sufficiently small) in this section and prove
Part~(\ref{part:main1}) of Theorem~\ref{thm:main}.  We assume throughout this section that the edge realization probability
$p \le p_0$ for some sufficiently small constant $p_0$. In this case, $\floor{{1 \over p}} - 1 \geq (1 - O(p_0)) \cdot {1 \over p}$ and we
use this inequality frequently in the proof. Indeed, throughout this section, one should view $p_0$ as a negligible constant and hence the
term $(1 - O(p_0))$ can essentially be ignored.

Let $\delta_0 = 0.02$, and $\eps_0 = 0.02001$.  Our algorithm is stated as Algorithm~\ref{alg:small-p} below:

%%
%%We use the following simple inequality repeatedly in the proof.
%%\begin{align*}
%%  \floor{{1 \over p}} - 1 \ge  {1 \over p} - 2 \ge {1 \over p} - {2p_0 \over p} = (1 - 2p_0) {1 \over p} = (1 - O(p_0)) \cdot {1 \over p}
%% \end{align*}

\smallskip
\begin{algorithm}[H]
\textnormal
\SetAlgoNoLine
\KwIn{A graph $G(V,E)$ and an edge realization probability $p \leq p_0$.}
\KwOut{A subgraph $H(V,Q)$ of $G(V,E)$.}
\begin{enumerate}
\item Let $B$ be a maximum $\bMatch$-matching in $G$. 
\item Let $(M_1, M_2, \ldots, M_R) := \basicalg(G(V,E\setminus B),\eps_1)$ for $\eps_1 = (\eps_0 - \delta_0)/2$, and $E_{MC} = M_1 \cup \ldots \cup M_R$.  
%\item Let $E_{MC} = M_1 \cup  M_2 \cup \ldots \cup M_R$. 
\item Return $H(V,Q)$ where $Q:= B \cup E_{MC}$. 
\end{enumerate}
\caption{A $0.52$-Approximation Algorithm for Stochastic Matching}
\label{alg:small-p}
\end{algorithm}

Each vertex in $H$ has degree  $\bigO{{\log(1/p) \over p}}$ -- this follows immediately from Lemma~\ref{lem:basic-alg}. In what
follows, we prove that $H$ has a matching of size at least $(0.5+\delta_0) \cdot \opt = 0.52 \cdot \opt $ in expectation, which will complete the
proof of Part~(\ref{part:main1}) of Theorem~\ref{thm:main}.
 
First notice that if $\card{M_R} < ({1 \over 2} - {\eps_0 + \delta_0 \over 2})\OPT$ where $M_R$ is the smallest matching in the matching
cover $E_{MC}$ found by Algorithm~\ref{alg:small-p}, then by Claim~\ref{clm:small-L}, the expected matching size in $Q$ is at least
$({1 \over 2} + {\eps_0 + \delta_0 \over 2} - {\eps_0 - \delta_0 \over 2})\OPT = ({1 \over 2} + \delta_0) \cdot \OPT$.  Therefore, from now
on we focus on the case that $\card{M_R} \ge ({1 \over 2} - {\eps_0 + \delta_0 \over 2})\OPT$.

In this case, by Lemma~\ref{lem:basic-alg}, w.p. $1-o(1)$, there exists a matching $M$ among the realized edges in $E_{MC}$ with size at
least
\begin{align*}
  \paren{1 - {\eps_0 - \delta_0 \over 2}} \paren{{1 \over 2} - {\eps_0 + \delta_0 \over 2}} \OPT &\ge  \paren{{1 \over 2} - {\eps_0 + \delta_0 \over 2} - {\eps_0  - \delta_0 \over 4}} \OPT \\
  &=  \paren{{1 \over 2} - {3\eps_0 \over 4} - {\delta_0 \over 4}} \OPT \ge \paren{{1 \over 2} - \eps_0}\OPT
\end{align*}
In the following, we assume this event happens\footnote{This assumption can be removed while losing a negligible factor of $o(1)$ in the
  size of final matching.} and prove that the set of edges realized in the $\bMatch$-matching $B$ can be used to augment the matching $M$
to create a matching of size $({1 \over 2} + \delta_0) \cdot \OPT$ in expectation.  To simplify the analysis, we assume w.l.o.g. that
$\card{M} = ({1\over2} - \eps_0)\OPT$ (i.e., we only keep $({1\over2} - \eps_0)\OPT$ edges of $M$ and remove any additional edges if there
is any). By the $b$-matching lemma (Lemma~\ref{lem:b-matching}),
$\card{B} \ge \paren{\bMatch-1} \OPT \ge (1 - O(p_0)) \cdot {\OPT \over p}$, and hence, to prove Part~(1) of Theorem~\ref{thm:main}, it
suffices to prove the following statement.

%%Our goal is to show that there are many edges in any realization $B_p$ that can form an augmenting path for the matching $M$ and hence $M$ can be augmented to a matching 
%%of size $(\frac{1}{2} + \delta_0) \OPT$. Formally, 
\begin{lemma}\label{lem:b-matching-augmentation-raw}
  Let $M$ be a matching of size $\paren{{1 \over 2} - \eps_0} \OPT$, and $B$ be a $\bMatch$-matching of size at least $(1 - O(p_0)) \cdot {\OPT \over p}$; then 
  the expected maximum matching size in $M \cup \BoutM$ is at least $({1 \over 2} + \delta_0) \OPT$.
\end{lemma}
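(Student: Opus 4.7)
The plan is to partition the edges of $B$ according to their endpoints relative to $U := V(M)$ (so $|U| = 2|M| = (1-2\eps_0)\OPT$) and $\bar U := V \setminus U$: let $\BoutM$ be the edges of $B$ with both endpoints in $\bar U$, and let the rest of $B$ consist of edges with at least one endpoint in $U$. A threshold parameter $\alpha > 0$ (chosen so that $\alpha/3$ comfortably exceeds $\eps_0 + \delta_0$) splits the analysis into two cases depending on $|\BoutM|$.

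\emph{Easy case: $|\BoutM| \geq \alpha \cdot \OPT/p$.} Since $\BoutM$ is itself a $\bMatch$-matching living entirely on $\bar U$, Lemma~\ref{lem:outside-edges} (instantiated with $N = \alpha\cdot\OPT$) guarantees $\EX{\mu((\BoutM)_p)} \geq (1 - 3p)\cdot \alpha \cdot \OPT/3$. Because $V(\BoutM) \subseteq \bar U$ is disjoint from $V(M)$, the realized matching inside $\BoutM$ combines with $M$ into a matching of expected size at least $(1/2 - \eps_0)\OPT + (1-3p)\alpha\OPT/3$, which exceeds $(1/2 + \delta_0)\OPT$ for sufficiently small $p_0$.

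\emph{Hard case: $|\BoutM| < \alpha \cdot \OPT/p$.} Now almost all edges of $B$ touch $U$, and we look for length-three augmenting paths of $M$ using realized $B$-edges. For each $e = (u,v) \in M$, declare $e$ \emph{successful} if there exist distinct $a,b \in \bar U$ with $(u,a), (v,b) \in B$ both realized, and such that the only realized $B$-edge incident on $a$ (resp.\ $b$) is $(u,a)$ (resp.\ $(v,b)$). The key structural observation is that the witness vertex $a$ of a successful edge can only witness the unique $M$-edge through $u$, so the length-three paths certified by different successful edges are automatically vertex-disjoint and may all be used simultaneously to augment $M$. Hence the increment to $|M|$ is at least the number of successful edges, and it suffices to show the expected number of successful edges is at least $(\eps_0 + \delta_0)\OPT$.

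\emph{Bounding the success probability.} For each $e=(u,v)\in M$, write $d_u = |N_B(u) \cap \bar U|$, $d_v = |N_B(v) \cap \bar U|$, and let $d_a$ denote the $B$-degree of each $a \in \bar U$. In the bipartite-like situation where $N_B(u)$ and $N_B(v)$ share no vertex in $\bar U$, success of $u$ and success of $v$ are independent and each has probability essentially $1 - \prod_{a \in N_B(u) \cap \bar U}(1 - p(1-p)^{d_a-1})$, a quantity which Proposition~\ref{prop:upper-exp} and Proposition~\ref{prop:upper-exp-2} let us lower bound cleanly in terms of $d_u$ and the $d_a$'s. In the general case we must also account for common neighbors $a \in N_B(u) \cap N_B(v) \cap \bar U$, where the two success events become correlated; these will be handled either by a disjointification argument or by bounding their contribution separately. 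After this per-edge bound, the global quantity $\sum_{e \in M} \Pr[e \text{ successful}]$ is cast as a minimization program over the degree sequences $\{d_u : u \in U\}$ and $\{d_a : a \in \bar U\}$ subject to the constraints $d_u, d_a \leq \bMatch$, $\sum_{u \in U} d_u \geq (1 - O(\alpha + p_0))\OPT/p$ (coming from the lower bound on $|B| - |\BoutM|$), and $\sum_{a \in \bar U} d_a \leq (|U| + 2|\BoutM|$ weighted appropriately). The main obstacle is this last step: solving the minimization program and verifying that even under the worst-case degree distribution the expected number of successful edges is at least $(\eps_0 + \delta_0)\OPT$. I expect the optimum to concentrate the $B$-neighborhoods of the $\bar U$-side as evenly as possible, and the analysis should compare this uniform extremum against the actual degree budget, leveraging the convexity that Propositions~\ref{prop:upper-exp}--\ref{prop:upper-exp-2} expose.
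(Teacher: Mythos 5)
Your overall architecture matches the paper's: split on the size of $\BoutM$, dispatch the case of many outside edges via Lemma~\ref{lem:outside-edges}, and in the remaining case count vertex-disjoint length-three augmenting paths whose endpoints are ``private'' witnesses, lower-bounding their number through a degree-based minimization program. But there is a genuine quantitative gap in how you treat the hard case: you augment $M$ alone and throw away $\BoutM$ entirely once $\card{\BoutM} < \alpha\,\OPT/p$. For the easy case to yield $(\eps_0+\delta_0)\OPT$ you are forced to take $\alpha \ge 3(\eps_0+\delta_0)/(1-3p) \approx 0.12$, and then in the hard case the adversary can sit just below this threshold. The augmenting-path machinery (yours and the paper's) guarantees at most roughly $\eta \approx 0.07$ expected successful edges per matched edge that carries full residual $C$-degree on both endpoints, and the total ``excess degree'' available is, after the correct counting, about $\paren{p\card{C}-\card{M}} \approx \paren{\tfrac12 + 3\eps_0 - 2\alpha}\OPT \approx 0.32\,\OPT$; this yields only about $0.023\,\OPT$ expected augmentations (even a best-possible sharpening of the per-edge constant to $\approx 0.095$ gives $\approx 0.03\,\OPT$), well short of the required $(\eps_0+\delta_0)\OPT \approx 0.04\,\OPT$. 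The paper avoids exactly this shortfall by \emph{not} discarding $\BoutM$ in the hard case: it conditions on an arbitrary realization of $\BoutM$, adds its maximum matching $M'$ to form $\newM = M\cup M'$, augments $\newM$, and keeps $\expect{\outMs}$ in the final accounting, where the term $(1-3\eta)\expect{\outMs} \ge (1-3\eta)(1-O(p_0))\,p\card{\BoutM}/3$ precisely offsets the loss $2p\eta\card{\BoutM}$ of usable $\BinM$ edges. Your clean dichotomy removes this compensation, and no choice of $\alpha$ makes both cases meet the $0.52$ target simultaneously.

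A second, more local error: your degree-budget constraint $\sum_{u\in U} d_u \ge (1-O(\alpha+p_0))\OPT/p$ equates the number of $B$-edges touching $U$ with the number of $U$-to-$\bar U$ edges. Edges of $B$ with \emph{both} endpoints in $U$ count toward $\card{B}-\card{\BoutM}$ but supply no witnesses and consume two units of the per-vertex budget $\bMatch$ on $U$; the correct bound is the one in Claim~\ref{clm:C-size}, $\card{C} \ge 2\paren{\card{B}-\card{\BoutM}} - 2\card{M}/p$, which is smaller than your stated budget by about $0.2\,\OPT/p$ and further tightens the arithmetic against you. (Your product formula for the success probability across witnesses also overlooks correlations from $B$-edges joining two witnesses, but that is repairable by the sequential conditioning used in Lemma~\ref{lem:1-neighbor}; the two issues above are the substantive ones.)
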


% Part (1) of Theorem~\ref{thm:main} immediately follows from Lemma~\ref{lem:b-matching-augmentation-raw} and the discussion above. Hence, the rest of this section is devoted to the proof of Lemma~\ref{lem:b-matching-augmentation-raw}, which is the key part of our analysis. 
\begin{proof}

Let $\BinM$ be the set of edges in $B$ that are incident on the vertices in the matching $M$, and let $\BoutM = B \setminus \BinM$.  Let
$\outMs$ be the random variable denoting the maximum matching size of a realization of $\BoutM$. By Lemma~\ref{lem:outside-edges},
\begin{equation}
  \label{eq:bout}
  \expect{\outMs} \ge {\card{\BoutM} \over \bMatch} \cdot {1 - 3p \over 3} \ge {p \card{\BoutM}} \cdot {1 - 3p \over 3}
\end{equation}
Therefore, if $\card{\BoutM} \ge 6\eps_0 \cdot {\OPT \over p}$, then 
\begin{align*}
  \expect{\outMs} \ge 6\eps_0 \cdot {\OPT} \cdot {1 - 3p \over 3} =  2\eps_0  (1 - 3p)  \cdot \OPT \ge (\eps_0 + \delta_0) \cdot \OPT \tag{assuming $p_0 \le { \eps_0 - \delta_0 \over 6\eps_0}$}
\end{align*}
and since no edge in $\BoutM$ is incident on the vertices in $M$, the expected matching size in $M \cup B_p$ is at least
\begin{align*}
  \paren{{1 \over 2} -\eps_0} \OPT + (\eps_0 + \delta_0) \OPT = \paren{{1 \over 2} + \delta_0} \OPT  
\end{align*}
as asserted by Lemma~\ref{lem:b-matching-augmentation-raw}. In the following, we assume $\card{\BoutM} \le 6\eps_0 \cdot {\OPT \over p}$.
Furthermore, we fix a realization of $\BoutM$ and fix a maximum matching $M'$ in the realization of $\BoutM$ (whose size is $\outMs$ by
definition). In other words, we will lower bound the expected maximum matching size in $M \cup B_p$ conditioned on \emph{any} realization of
$\BoutM$. The lower bound we obtain would be a linear function of $\outMs$, and by linearity of expectation, we can simply replace $\outMs$
with $\expect{\outMs}$, use Eq~(\ref{eq:bout}) to lower bound $\expect{\outMs}$, and obtain the desired lower bound of
$(1/2 + \delta_0)\OPT$ on the expected maximum matching size.

Denote by $\newM$ the matching $M \cup M'$ (since the matchings $M$ and $M'$ are vertex-disjoint, $M \cup M'$ is indeed a valid matching of
size $\card{M} + \outMs$). We can focus the realizations of $\BoutM$ where $\outMs \le 2\eps_0 \cdot \OPT$ since otherwise the matching $\newM$
already have size $(1/2 + \eps_0) \cdot \OPT > (1/2 + \delta_0) \cdot \OPT$. Therefore, we have $\outMs \le 2\eps_0 \cdot \OPT = O(\OPT)$
and $\card{\newM} = O(\OPT)$, which will be useful in simplifying the presentation.

% , and remove from $\BinM$ any edge incident on $M'$ (denote the resulting edge $\BinM'$). After removing these edges, 
% the matching $M \cup M'$ is 
% Since $B$ is a $\bMatch$-matching, at
% most $\outMs \cdot {\bMatch} \le p\cdot \outMs$ edges is removed and hence 
% \begin{equation}
%   \label{eq:boutmp}
%   \card{\BinM'} \ge \card{\BinM} - p \cdot \outMs
% \end{equation}

% this is not the case, i.e., at least
% \begin{align*}
% \card{B} - \bMatch \cdot 6\eps_0 \cdot  {\OPT} \ge (1 - O(p_0)) {\OPT \over p} - {1 \over p} \cdot 6\eps_0 \cdot {\OPT} = (1 - O(p_0)  - 6\eps_0){\OPT \over p}
% \end{align*}
% edges in $B$ are incident on at least one vertex in $M$.  

Now consider the edges in $\BinM$. We further denote by $C$ the set of edges in $\BinM$ that are incident on \emph{exactly} one vertex in
$\newM$. In the following, we first show that $\card{C}$ must be large (Claim~\ref{clm:C-size}) and then show that many edges in $C$ can be
used to augment the matching $\newM$, which leads to an increment on the matching size as a function of $\card{C}$
(Lemma~\ref{lem:1-neighbor} and Lemma~\ref{lem:allocation}). Combining these two statements completes the proof of
Lemma~\ref{lem:b-matching-augmentation-raw}.

% For simplicity, we remove the remaining edges of $B$ from now on, hence any edge in
% $B$ is incident on at least on vertex of $V(M)$ (i.e., the set of vertices matched by $M$).

% \begin{claim}\label{clm:C-size}
%   $\card{C} \ge (1 - O(p_0)  - 10\eps_0) \cdot {\OPT \over p}$.
% \end{claim}
\begin{claim}\label{clm:C-size}
  $\card{C} \ge 2\card{\BinM} - {2\card{\newM} \over p}$.
\end{claim}

\begin{proof}
  Let $x$ denote the number of edges in $\BinM$ that have degree $2$ to $V(\newM)$ (i.e., are incident on two vertices in $\newM$).  By
  definition, every edge in $\BinM$ is incident on $M$, and hence every edge in $\BinM$ is also incident on $\newM ( = M \cup M')$.
  Consequently, there are $\card{\BinM} - x$ edges in $\BinM$ that have degree $1$ to $V(\newM)$ (i.e. belongs to $C$). Therefore, the total
  degrees of all vertices $V(\newM)$ provided by $\BinM$ is at least:
%  \begin{align*}
  $  2 \cdot x +  1 \cdot \paren{\card{\BinM} - x} = x + \card{\BinM}$.
%  \end{align*}

  On the other hand, since $\card{V(\newM)} = 2\card{\newM}$ and $B$ (hence $\BinM$) is a $\bMatch$-matching, the total degree of the
  vertices $V(\newM)$ provided by $\BinM$ is at most ${2\card{\newM} \over p}$. Therefore,
%  \begin{align*}
   $ x +\card{\BinM} \le  {2\card{\newM} \over p}$,
%  \end{align*}
  which implies $x \le {2\card{\newM} \over p}  - \card{\BinM}$. Therefore, the number of edges in $\BinM$ incident on exactly one
  vertex in $V(\newM)$ (i.e., $\card{C}$) is at least
  \begin{align*}
    \card{\BinM} - \paren{{2\card{\newM} \over p}  - \card{\BinM}  } = 2\card{\BinM} - {2\card{\newM} \over p} 
  \end{align*}
  completing the proof. 
\end{proof}

The following two lemmas are dedicated to showing that the edges in a realization of $C$, $C_p$, form many vertex-disjoint length-three augmenting
paths for the matching $\newM$ in expectation, which is a lower bound on the expected increment on the matching size.  We first define some
notation. Let $W:= V \setminus V(\newM)$, i.e., $W$ is the set of vertices \emph{not} matched by $\newM$. Denote the edges in $\newM$ by
$\set{(u_i, v_i) \mid i \in [\card{\newM}]}$, and denote by $d(u_i)$ (resp. $d(v_i)$) the number of edges in $C$ incident on $u_i$
(resp. $v_i$). Since, by definition, the edges in $C$ are only incident on one vertex in $V(\newM)$, $d(u_i)$ (resp. $d(v_i)$) is also the
number of edges in $C$ between $u_i$ (resp. $v_i$) and $W$.  In the following, whenever we say ``neighbors'' or ``degrees'', they are only
w.r.t. the edges $C$. Let $f(x)$ be the function defined in Proposition~\ref{prop:upper-exp}.

\begin{lemma}\label{lem:1-neighbor}
  For any edge $(u_i,v_i) \in M$, w.p. at least 
  \[(1 - O(p_0)) {f({1 / e})\cdot p \over e^2} \paren{1 - e^{-p \cdot d(v_i)}} \cdot \max\set{d(u_i) - 1, 0},\]
  there exists a length-three augmenting path $a_i - u_i - v_i - b_i$ in the realization $C_p$ of $C$, such that $a_i, b_i$ have no neighbors
  other than $u_i$ and $v_i$.
\end{lemma}

Note that we can use all edges $(u_i,v_i)$ in $\newM$ with such an augmenting path $a_i-u_i-v_i-b_i$ to (simultaneously) augment $\newM$ since
these augmenting paths are vertex-disjoint ($a_i$ and $b_i$ are only neighbors of $u_i$ and $v_i$). Therefore, the expected number of edges
in $\newM$ that has such an augmenting path is a lower bound on the expected increment on the matching size.

\begin{proof}[Proof of Lemma~\ref{lem:1-neighbor}]
  We consider three disjoint subsets of edges in $C$ one by one: $(i)$ the edges between $v_i$ and $W$, $(ii)$ the edges incident on a
  specific vertex $w$ in $W$ (excluding the edge $(v_i, w)$), and $(iii)$ the edges incident on neighbors of $u_i$ other than $w$ (excluding
  the edges incident on $v_i$).

  First, consider the edges between $v_i$ and $W$. The prob. that none of these $d(v_i)$ edges are realized is at most 
  \begin{align*}
    (1 - p)^{d(v_i)} \le e^{-p \cdot d(v_i)}
  \end{align*}
  Therefore, w.p. at least $1 - e^{-p \cdot d(v_i)}$, at least one edge between $v_i$ and $W$ is realized. We condition on this event and fix any such edge, denoted by $(v_i, b_i)$.

  Second, consider the edges incident on $b_i$ (excluding the edge $(v_i, b_i)$). There are at most $1/p$ such edges, and the prob. that
  none of them is realized is at least
  \begin{align*}
    (1-p)^{1/p} \ge {1 - p \over e} \ge {1 - p_0 \over e} \tag{$ p \leq p_0 \le 0.43$}
  \end{align*}
  where the first inequality is by Proposition~\ref{prop:upper-exp-2}. In the following, we
  further condition on no other edges incident on $w$ is realized.

  Third, consider all neighbors of $u_i$ other than $b_i$ (there are at least $\max\set{d(u_i) - 1, 0}$ such neighbors) and the edges
  incident on these neighbors (excluding the edges incident on $v_i$). For each one of these neighbors $w$ of $u_i$, the prob. that the edge
  $(u_i, w)$ is realized (w.p. $p$) and $w$ does not have any neighbor other than $u_i$ (and possibly $v_i$) (w.p. at least
  ${1 - p_0 \over e}$ by Proposition~\ref{prop:upper-exp-2}) is at least $p \cdot {1 - p_0 \over e}$.  Therefore, the prob. that at least
  one neighbor of $u_i$ satisfies these two properties is at least
  \begin{align*}
    1 -    \paren{1 - p \cdot {1 - p_0 \over e}}^{\max\set{d(u_i) - 1, 0}} \ge & 1 - e^{-  (1 - p_0) \cdot p\cdot \max\set{d(u_i) - 1,0}/e}\\
    \ge &f({1 \over e})\cdot (1 - p_0) \cdot  p \cdot \max\set{d(u_i) - 1, 0}/e
  \end{align*}
  where $f(x) = {1 - e^{-x} \over x}$ and the second inequality is by Proposition~\ref{prop:upper-exp}, using the fact that
  \begin{align*}
  { (1 - p_0) \cdot p \max\set{d(u_i) - 1, 0} \over e} \le {1\over e} \tag{since $d(u_i) \le {1\over p}$}
  \end{align*}
 
  Putting the three steps together, the prob. that there is an augmenting path $a_i-u_i-v_i-b_i$ where $a_i$ and $b_i$ has no neighbors
  other than $u_i$ and $v_i$ is at least
  \begin{align*}
    &\paren{1 -  e^{-p \cdot d(v_i)}} \cdot {1 - p_0  \over e} \cdot {f({1 \over e})\cdot (1 - p_0) \cdot  p \cdot \max\set{d(u_i) - 1, 0} \over e} \\
    =& \paren{1 - O(p_0)} \cdot f({1 \over e}) \cdot {p \over e^2} \paren{1 -  e^{-p \cdot d(v_i)}} \cdot \max\set{d(u_i) - 1, 0}
  \end{align*}
\end{proof}

As we pointed out after the statement of Lemma~\ref{lem:1-neighbor}, we need to lower bound the expected number of edges in $\newM$ that has
such an augmenting path, which, by Lemma~\ref{lem:1-neighbor}, is lower bounded by the function $F$ defined below. For the two vectors
$d_u:= (d(u_1),\ldots,d(u_{\card{\newM}}))$ and $d_v:= (d(v_1),\ldots,d(v_{\card{\newM}}))$,
\begin{align*}
  F(d_u,d_v):= \sum_{i \in [\card{\newM}]} \paren{1 - O(p_0)} \cdot f({1 \over e})\cdot {p \over e^2} \paren{1 - e^{-p \cdot d(v_i)}} \cdot \max\set{d(u_i) - 1, 0} 
\end{align*} 
The goal now is to find the smallest value of $F(d_u,d_v)$, with the constraint on the vectors $d_u$ and $d_v$ formulated in the following
(non-linear) minimization program (referred to as \MP).
 
 \begin{tbox}
 \begin{equation}
\begin{array}{ll@{}ll}
\text{minimize}  & F(d_u,d_v) &\\
\text{subject to}& \sum_{i \in [\card{\newM}]} d(u_i) + d(v_i) = \card{C}   \label{eq:MP}\\
                 & d(u_i),d(v_i) \in \bracket{\bMatch}~~~~~~~~ i=1 ,\ldots,\card{\newM}
\end{array}
\end{equation}
\end{tbox}
The constraint on each individual $d(u_i)$ and $d(v_i)$ is because $C$ is a $\bMatch$-matching.  The following lemma lower bounds the
value of the objective function in \MP.  

% Constraint~(\ref{eq:constraint}) is because $\sum_{i \in [\card{M}]} d(u_i) + d(v_i) = \card{C}$ and we can apply Claim~\ref{clm:C-size},
% and the c

\begin{lemma}\label{lem:allocation}
  Let $\Fstar$ denote the optimal value of \MP; then, 
  \[\Fstar \ge \paren{p \cdot \card{C} - \card{\newM}} \cdot \eta -O(p_0)\cdot\OPT\]
  where $\eta := f({1 \over e})\cdot {1 \over e^2} \paren{1 - e^{-1}} > 0.07157$. 
%  \[\Fstar \ge (p \cdot \card{C} - (1 + O(p_0))\card{M}) f({1 \over e})\cdot {p \over e^2} \paren{1 - e^{-1}} \cdot (\bMatch - 1).\]
\end{lemma}

% \tag{$\card{\BinM} = \card{B} - \card{\BoutM}$,
%        $\card{\newM} = \card{M} + \outMs$}

The proof of the Lemma~\ref{lem:allocation} is technical, and we defer it to Section~\ref{sec:mp}. By Lemma~\ref{lem:allocation} and
Claim~\ref{clm:C-size} (the lower bound on $\card{C}$) the expected increment (over $\newM$) of the matching size is at least
\begin{align*}
  \Fstar \ge &\paren{p \cdot \card{C} - \card{\newM}} \cdot \eta -O(p_0)\cdot\OPT \\
  \ge  &\paren{p \cdot(2 \card{\BinM} - 2 \card{\newM}/p) - \card{\newM}} \cdot   \eta - O(p_0)\cdot\OPT \tag{By Claim~\ref{clm:C-size}}\\
  =  &\paren{ 2p \card{\BinM} - 3 \card{\newM}} \cdot \eta - O(p_0)\cdot\OPT\\
  =  &\paren{ 2p (\card{B} - \card{\BoutM}) - 3 (\card{M} + \outMs)} \cdot \eta - O(p_0)\cdot\OPT \\
  =  &\paren{ 2p \card{B} - 3\card{M} - 2p\card{\BoutM} - 3\outMs} \cdot \eta - O(p_0)\cdot\OPT \\
  = &\paren{ 2p (1 - O(p_0)){\OPT \over p} - 3 \paren{{1 \over 2} - \eps_0}\OPT - 2p\card{\BoutM} - 3\outMs} 
      \cdot  \eta - O(p_0)\OPT \\
  = &\paren{ \paren{ {1 \over 2} + 3\eps_0}{\OPT} - 2p\card{\BoutM} - 3\outMs} \cdot  \eta - O(p_0)\cdot\OPT
\end{align*}

%   \ge & \paren{p \cdot (1 - O(p_0) - 10\eps_0){\opt \over p} - (1 + O(p_0))\paren{{1 \over 2} -\eps_0} \OPT} \cdot (1-O(p_0)) \cdot \eta\\
%   = & \paren{(1 - O(p_0) - 10\eps_0){\opt} - \paren{{1 \over 2} -\eps_0 + O(p_0)} \OPT} \cdot (1-O(p_0)) \cdot  \eta\\
% \ge &\paren{1/2 - O(p_0) - 9\eps_0} \cdot \eta \cdot (1-O(p_0)) \cdot \OPT \\
% %  \ge &(1/2 - O(p_0) - 9\eps_0) f({1 \over e})\cdot {p \over e^2} \paren{1 -  e^{- 1}} \cdot (1 - O(p_0)) {1 \over p}\OPT\\
%   \geq &(1/2 - O(p_0) - 9\eps_0) \cdot \eta \cdot \OPT \\
%   \geq &(0.38 - O(p_0)) \cdot 0.071 \cdot \OPT \tag{$\eps_0 = 0.013$ and $\eta > 0.071$} \\
%   \geq &2\eps_0 \cdot \OPT \tag{assuming $p_0$ is sufficiently small}

Since the original matching $\newM$ is of size $(1/2 - \eps_0)\cdot\opt + \outMs$, the expected matching size in $M \cup B_p$, i.e.,
$\mu(M\cup B_p)$, is
\begin{align*}
  &\expect{\mu(M\cup B_p)} = \sum_{\outMs} \prob{\outMs} \expect{\mu{(M\cup B_p)} \mid \outMs} \tag{$\expect{X} = \sum_Y \prob{Y}\expect{X \mid
  Y}$} \\
\ge   &\sum_{\outMs} \prob{\outMs} (\card{\newM} + {\Fstar}) \\
\geq&  \paren{{1 \over 2} - \eps_0}\cdot \opt + \expect{\outMs} +
                                            \paren{ \paren{ {1 \over 2} + 3\eps_0}{\OPT} - 2p\card{\BoutM} - 3\expect{\outMs}}
 \cdot  \eta - O(p_0)\cdot\OPT\\
\ge&  \paren{{1 \over 2} + {\eta \over 2} - \eps_0  + 3\eta\eps_0}\cdot \opt + (1 - 3\eta)\expect{\outMs} - 2p\eta\card{\BoutM} -
     O(p_0)\cdot\OPT\\
\ge&  \paren{{1 \over 2} + {\eta \over 2} - \eps_0  + 3\eta\eps_0}\cdot \opt + (1 - 3\eta) (1 - O(p_0)){p\card{\BoutM} \over 3} -
     2p\eta\card{\BoutM} - O(p_0)\cdot\OPT \tag{By Equation~\ref{eq:bout}}\\
\ge&  \paren{{1 \over 2} + {\eta \over 2} - \eps_0  + 3\eta\eps_0}\cdot \opt + \paren{{1 \over 3} - 3\eta}p\card{\BoutM} - O(p_0)\cdot\OPT
\end{align*}
Since $\eta \approx 0.07157$, ${1 \over 3} - 3\eta > 0$, and we have 
\begin{align*}
  &  \paren{{1 \over 2} + {\eta \over 2} - \eps_0  + 3\eta\eps_0}\cdot \opt + \paren{{1 \over 3} - 3\eta}p\card{\BoutM} -
       O(p_0)\cdot\OPT\\
  \ge &  \paren{{1 \over 2} + {\eta \over 2} - \eps_0  + 3\eta\eps_0}\cdot \opt - O(p_0)\cdot\OPT\\
  > &~0.52 \cdot \OPT \tag{$\eps_0 = 0.02001$, $\eta > 0.07157$, and $p_0$ is sufficiently small.} \\  
   =  & (1/2 + \delta_0) \cdot \OPT
\end{align*}
completing the proof of Lemma~\ref{lem:b-matching-augmentation-raw}.

\subsection{Lower Bounding the Value of \MP}\label{sec:mp}

  \newcommand{\dustar}{\ensuremath{d^\star_u}} 
  \newcommand{\duistar}{\ensuremath{d^\star(u_i)}} 
  \newcommand{\duipstar}{\ensuremath{d^\star(u_{i'})}} 
  \newcommand{\duiOstar}{\ensuremath{d^\star(u_{i_1})}} 
  \newcommand{\duiTstar}{\ensuremath{d^\star(u_{i_2})}} 
  \newcommand{\dvstar}{\ensuremath{d^\star_v}} 
  \newcommand{\dvistar}{\ensuremath{d^\star(v_i)}} 
  \newcommand{\dviOstar}{\ensuremath{d^\star(v_{i_1})}} 
  \newcommand{\dviTstar}{\ensuremath{d^\star(v_{i_2})}}

In this section, we prove Lemma~\ref{lem:allocation}, i.e., the following inequality,
\begin{align*} 
\Fstar \paren{= \min F(d_u, d_v)} \ge (p \cdot \card{C} - \card{\newM}) \cdot \eta - O(p_0)\OPT 
\end{align*}
where $\eta := f({1 \over e})\cdot {1 \over e^2} \paren{1 - e^{-1}}$.  

Recall that 
  \begin{align*}
    F(d_u,d_v) = &\sum_i  \paren{1 - O(p_0)} \cdot  f({1 \over e})\cdot {p \over e^2} \paren{1 - e^{-p \cdot d(v_i)}} \cdot \max\set{d(u_i) - 1,0} \\
=  &  \paren{1 - O(p_0)} \cdot  f({1 \over e})\cdot {p \over e^2}  \sum_i\paren{1 - e^{-p \cdot d(v_i)}} \cdot \max\set{d(u_i) - 1,0} 
  \end{align*}
  Since the term $ \paren{1 - O(p_0)} \cdot  f({1 \over  e})\cdot {p \over e^2} $ is independent of $d_u$ and $d_v$, 
  \begin{align*}
   \arg\min_{d_u,d_v}  F =  \arg\min_{d_u, d_v} \sum_i \paren{1 - e^{-p \cdot d(v_i)}} \cdot \max\set{d(u_i) - 1,0} 
  \end{align*}

  Define $d(V) := \sum_id(v_i)$ and $d(U) := \sum_i d(u_i)$; then, $d(V) + d(U) = \card{C}$.  We need to prove that for any choice of $d(V)$
  and $d(U)$, the lemma statement holds.  First of all, we can assume $d(U) \ge \card{\newM}$: otherwise, since $d(V) \le \bMatch\card{\newM}$, we
  will have
  \begin{align*}
    \card{C} = d(V) + d(U) \le \bMatch\card{\newM} + \card{\newM} \le \paren{{1 \over p} + 1} \card{\newM}
  \end{align*}
% = \paren{{1\over 2} - \eps_0 + {p \over 2} - \eps_0 p}{\opt \over p}
  Therefore, for the target lower bound on $\Fstar$
  \begin{align*}
    &  (p \cdot \card{C} - \card{\newM})\cdot \eta - O(p_0)\OPT \\
   \le   &  \paren{p \paren{{1 \over p} + 1} \card{\newM}  - \card{\newM}} \cdot \eta - O(p_0)\OPT  \\
   \le   &  p \eta \card{\newM} - O(p_0)\OPT \\
   \le   &  p_0 \eta \card{\newM} - O(p_0)\OPT 
  \end{align*}
  which can be made negative by choosing the constant hidden in $O(p_0)$ to be $1$, proving Lemma~\ref{lem:allocation}.

  We further assume $d(U) - \card{\newM}$ is an integer multiple of $\bMatch -1$ and $d(V)$ is an integer multiple of $\bMatch$. This can be
  achieved by removing at most $1/p$ edges from $d(U)$ and $d(V)$ respectively. Since $F$ is monotonically increasing with any $d(u_i)$ or
  $d(v_i)$, removing edges from $d(U)$ and $d(V)$ can only make $\Fstar$ even smaller. Therefore, if we show that after removing these
  edges, the target lower bound on $\Fstar$ holds, then it definitely holds for the original $d(U)$ and $d(V)$. In the following, we fix any
  $d(U)$ and $d(V)$ where $d(U) - \card{\newM}$ is an integer multiple of $\bMatch -1$, $d(V)$ is an integer multiple of $\bMatch$, and
  $d(V) + d(U) \ge \card{C} - {2 \over p}$. We prove the following key property of $F(d_u, d_v)$.
  
  \begin{lemma}\label{lem:two-values}
    There exists $d_u$ and $d_v$ that minimizes $F(d_u, d_v)$ where any entry in $d_u$ is either $1$ or $\bMatch$ and any entry in $d_v$ is
    either $0$ or $\bMatch$.
  \end{lemma}
  
  We first show why Lemma~\ref{lem:two-values} implies the target lower bound on $\Fstar$, and then prove Lemma~\ref{lem:two-values}. Fix 
  $\duistar$ and $\dvistar$ that satisfy the property in Lemma~\ref{lem:two-values}.  Since every entry in $\dustar$ is either $1$ or
  $\bMatch$, the number of $1$'s in $\dustar$ is $x := \card{\newM} - (d(U) - \card{\newM})/( \bMatch - 1)$. Similarly, the number of $0$'s in
  $\dvstar$ is $y := \card{\newM} - d(V)/\bMatch$. Therefore, the number of edges in $\newM$ where $\dvistar = \duistar = \bMatch$ is at least
  \begin{align*}
    \card{\newM} - x - y = &\card{\newM} - \paren{\card{\newM} - {d(U) - \card{\newM} \over \bMatch - 1}} - \paren{ \card{\newM} - {d(V)
                             \over \bMatch}}\\
    = & {d(U) - \card{\newM} \over \bMatch - 1} -  \card{\newM} + {d(V) \over \bMatch}\\
    \ge &   {d(U) - \card{\newM} \over {1 \over p}} -  \card{\newM} + d(V)p \tag{$\bMatch -1 \le \bMatch \le {1 \over p}$}\\
    = & p \cdot d(U) + p \cdot d(V) - (1 + p)\card{\newM}\\
    \ge & p \paren{\card{C} - {2 \over p}} - (1 + p)\card{\newM} \tag{$d(V) + d(U) \ge \card{C} - {2 \over p}$.}\\
    = &p \card{C} - \card{\newM} - O(p_0)\OPT \tag{$\card{\newM} = O(\OPT)$}
  \end{align*}

    % = & (1 - O(p_0) - 10\eps_0) \OPT -2 - (1+p) ({1 \over 2} - \eps_0) \OPT\\
    % = & (1 - O(p_0) - 10\eps_0) \OPT - ({1 \over 2} - \eps_0 + {p \over 2} - \eps_0 p)\OPT\\
    % = & ({1 \over 2} - O(p_0) - 9\eps_0) \OPT.

  And just focusing on these $p \card{C} - \card{\newM} - O(p_0)\OPT$ edges, we have
  \begin{align*}
       \Fstar \ge &\paren{p \card{C} - \card{\newM} - O(p_0)\OPT} \cdot (1- O(p_0)) \cdot  f({1\over  e})\cdot {p \over e^2} \paren{1 - e^{-1}} \cdot (\bMatch -
                    1) \\
    \ge &\paren{p \card{C} - \card{\newM} - O(p_0)\OPT} \cdot (1- O(p_0)) \cdot  f({1\over  e})\cdot {p \over e^2} \paren{1 - e^{-1}} \cdot
          {1 \over p}   \tag{$\bMatch -1 \ge (1 - O(p_0)){1 \over p}$}\\
    \ge &\paren{p \card{C} - \card{\newM} - O(p_0)\OPT} \cdot (1- O(p_0)) \cdot  \eta 
             \\
    \ge &\paren{p \card{C} - \card{\newM} } \cdot  \eta  - O(p_0)\OPT \tag{$\card{\newM} = O(\OPT)$, $p \card{C} \le p\card{\BinM}= O(\OPT)$}\\
  \end{align*}
  which proves Lemma~\ref{lem:allocation}.

 % = &\paren{ p\card{C} - (1 + O(p))\card{\newM}} f({1\over  e})\cdot {p \over e^2} \paren{1 - e^{-1}} \cdot (\bMatch - 1) \cdot \OPT \tag{$p\card{C} = O(\card{\newM}$).}\\
 % = &\paren{ p\card{C} - (1 + O(p))\card{\newM}} f({1\over  e})\cdot {1 \over e^2} \paren{1 - e^{-1}} \cdot (1-O(p_0)) \cdot \OPT

%    = & ({1 \over 2} - O(p_0) - 9\eps_0) f({1\over  e})\cdot {p \over e^2} \paren{1 - e^{-1}} \cdot (\bMatch - 1) \OPT

  We now prove Lemma~\ref{lem:two-values} which will complete the proof.  
  \begin{proof}[Proof of Lemma~\ref{lem:two-values}]
    Fix any allocation $\dustar$ and $\dvstar$ that minimizes $F(d_u, d_v)$. We will show that first, there exists a sequence of locally
    reallocating the values (i.e., degrees) in $\dustar$ without changing the value of $F(d_u, d_v)$ such that at the end, every entry in
    $\dustar$ is either $1$ or $\bMatch$. After changing the vector $\dustar$, we then show that there exists a sequence of locally
    reallocating the values in $\dvstar$ without changing the value of $F(d_u, d_v)$ such that at the end, every entry in $\dvstar$ is
    either $0$ or $\bMatch$.

    We first explain how to change $\dustar$.  To simplify the presentation, we define $q_i = {1 - e^{-p \cdot \dvistar}}$ and the target
    expression becomes
    \begin{align}
      \sum_i \paren{1 - e^{-p \cdot \dvistar}} \cdot \max\set{\duistar - 1,0}  =  \sum_i q_i \cdot \max\set{\duistar -1, 0}   \label{eq:has-q}
   \end{align}
   % = c \sum_i q_i (d(u_i) - 1) = c(\sum_i q_i d(u_i) - \sum_iq_i)
   % Hence
   % \begin{align*}
   %   \min_{d(u_i), i\in[\card{\newM}]} F =  \min_{d(u_i), i\in[\card{\newM}]} \sum_i q_i d(u_i)
   % \end{align*}
   
   Recall that $\duistar$ satisfies $\duistar \in \bracket{\bMatch}$ (in \MP) (and hence $q_i \ge 0$) and $\sum_i \duistar = d(U)$.  First
   of all, if there exists some $i_1$ where $\duiOstar = 0$, then since $d(U) \ge \card{\newM}$, there must exist an index $i_2$ where
   $\duiTstar \ge 2$. Then, we can shift one degree from $\duiTstar$ to $\duiOstar$ and after the shift, $(a)$ $\max\set{\duiOstar -1, 0}$
   remains $0$ and hence $q_{i_1}\max\set{\duistar -1, 0}$ remains $0$, and $(b)$ $\max\set{\duiTstar -1, 0}$ decreases and hence
   $q_{i_1} \max\set{\duiTstar -1, 0}$ does not increase. Therefore, $F(d_u, d_v)$ does not increase after the shift, and from now on, we
   have $\duistar \ge 1$ for all $i \in [\card{\newM}]$.  To proceed, we need the following property of $\duistar$.

\newcommand{\diff}{\Delta}

  \begin{claim}\label{clm:duistar}
    For any pair of indices $i_1,i_2$ where $q_{i_1} > q_{i_2}$, either  $\duiTstar = \bMatch$ or $\duiOstar \le 1$.
  \end{claim}
  \begin{proof}
    Suppose not. We have $\duiTstar < \bMatch$ and $\duiOstar > 1$, for some $i_1$ and $i_2$. We can shift one degree from $\duiOstar$ to
    $\duiTstar$ and still get a valid allocation. In the following, we show that this new allocation achieves a smaller value of $F$, which
    contradicts to the optimality of $\duistar$.

    Since shifting from $\duiOstar$ to $\duiTstar$ only changes the degrees for $u_{i_1}$ and $u_{i_2}$, it suffices for us to prove that  
    \begin{align*}
     \diff :=  &(q_{i_2} \max\set{\duiTstar -1, 0} + q_{i_1} \max\set{\duiOstar-1,0}) \\
      &- (q_{i_2} \max\set{\duiTstar, 0}  + q_{i_1} \max\set{\duiOstar - 2, 0}) > 0
    \end{align*}
    Since $\duiOstar \ge 2$, $\max\set{\duiOstar - 1, 0} = \duiOstar - 1$, $\max\set{\duiOstar - 2, 0} = \duiOstar - 2$. In addition,
    since $\duiTstar \ge 0$, $ \max\set{\duiTstar, 0} = \duiTstar $. We have 
    \begin{align*}
     \diff =  &(q_{i_2} \max\set{\duiTstar -1, 0} + q_{i_1} (\duiOstar-1)) - (q_{i_2} \duiTstar + q_{i_1} (\duiOstar - 2))\\
                 = & q_{i_2} \paren{\max\set{\duiTstar -1, 0} - \duiTstar} +q_{i_1}\\
      \ge & q_{i_2} (\duiTstar -1 - \duiTstar) +q_{i_1} \\
      = & q_{i_1} - q_{i_2}.
    \end{align*}
    Since we have $q_{i_1} > q_{i_2}$, the value of $F$ decreases after the shifting according to Eq~\ref{eq:has-q}, a contradiction.
  \end{proof}

  We use Claim~\ref{clm:duistar} to prove the correctness of the following sequence of reallocation of $\dustar$.  Now, as long as there
  exists an index $i_1$, where $\duiOstar \in (1, \bMatch)$, since $d(U) - \card{\newM}$ is an integer multiple of $(\bMatch - 1)$, there must
  exists some index $i_2$ where $\duiTstar \in (1, \bMatch)$ (recall that $\duistar \ge 1$), and we will shift the values between
  $\duiOstar$ and $\dviTstar$ such that one of them becomes either $1$ or $\bMatch$ and both of them are still at least $1$ (it is easy to
  see this is always possible). First of all, every step of this reallocation reduces the number of vertices with
  $\duistar \in (1, \bMatch)$, and hence it will terminate. To see that this process never changes $F(d_u, d_v)$, $(a)$ it cannot be that
  $q_{i_1} \neq q_{i_2}$, since otherwise the indices $i_1$ and $i_2$ will contradict Claim~\ref{clm:duistar}, and $(b)$ if
  $q_{i_1} = q_{i_2}$, shifting the allocation between $i_1$ to $i_2$ will not change $F(d_u, d_v)$.

  Therefore, we can focus on the case where the entries of $\duistar$ are either $1$ or $\bMatch$.  We now consider $\dvistar$. Recall that
  $d(V) = \sum_i d(v_i)$ and $d(V)$ is an integer multiple of $\bMatch$. The target expression can be written as
  \begin{align*}
     &\sum_i  \paren{1 - e^{-p \cdot \dvistar}} \cdot \max\set{\duistar - 1, 0} \\
    =  &  \sum_{i:~ \duistar = \bMatch}  \paren{1 - e^{-p \cdot \dvistar}} \paren{\bMatch - 1} + \sum_{i:~ \duistar = 1}  \paren{1 - e^{-p
         \cdot \dvistar}} \cdot 0 \\
    =  &  \sum_{i:~ \duistar = \bMatch}  \paren{1 - e^{-p \cdot \dvistar}} 
  \end{align*}
  Since $F$ is monotonically increasing when any $d(u_i)$ increases, for the indices $i$ where $\duistar = 1$, ideally, one should allocate
  as many degrees to $\dvistar$ as possible, i.e., $\dvistar = \bMatch$. However, it might be the case that $d(V)$ cannot supply $\bMatch$
  degree for all $i$ where $\duistar = 1$. But in this case, we are done since reallocating between different $\dvistar$ where
  $\duistar = 1$ does not change the $F$ (in fact, $F$ is always $0$), and we can shift them such that we have as many $\dvistar = \bMatch$
  as possible and leave the rest equal to $0$.

  In the following, we assume $d(V)$ can supply $\bMatch$ degree for all $i$ with $\duistar = 1$, and hence $\dvistar = \bMatch$ whenever
  $\duistar = 1$. It suffices to only focus on $\dvistar$ where $\duistar = \bMatch$. We need the following property of $\dvistar$ to
  complete the argument.
  
  \begin{claim}\label{clm:dvistar}
    For any pair of  indices $i_1$ and $i_2$ such that $\duiOstar = \duiTstar = \bMatch$, we have that either $\min\set{\dviOstar, \dviTstar} = 0$ or
    $\max\set{\dviOstar, \dviTstar} = \bMatch$.
  \end{claim}
  \begin{proof}
    Suppose not. Then, for some $i_1$ and $i_2$, we have $\min\set{\dviOstar, \dviTstar} > 0$, and also $\max\set{\dviOstar, \dviTstar} <
    \bMatch$. Without lose of generality, assume $1 \le \dviOstar \le \dviTstar \le \bMatch -1$. Then, shifting one degree from $\dviOstar$ to
    $\dviTstar$ leads to a valid allocation, and we prove in the following that the new allocation decreases the objective function which
    contradicts the optimality of $\dvstar$.
    
    Since only the indices $i_1$ and $i_2$ are affected, it suffices for us to prove that
    \begin{align*}
      \diff := \paren{1 - e^{-p \cdot \dviOstar}} + \paren{1 - e^{-p \cdot \dviTstar}}  -   \paren{1 - e^{-p \cdot (\dviOstar - 1)}} + \paren{1
      - e^{-p \cdot (\dviTstar + 1)}} > 0
    \end{align*}
    We have 
    \begin{align*}
       \diff  \geq e^{-p \cdot \dviOstar} (e^p - 1) + e^{-p \cdot \dviTstar}(e^{-p} - 1)
    \end{align*}
    Since $\dviOstar \le \dviTstar$, $e^{-p \cdot \dviOstar} \ge e^{-p \cdot \dviTstar}$. We further have
    \begin{align*}
      e^{-p \cdot \dviOstar} (e^p - 1) + e^{-p \cdot \dviTstar}(e^{-p} - 1) &\ge  e^{-p \cdot \dviTstar} (e^p - 1) + e^{-p \cdot \dviTstar}(e^{-p} - 1)\\
      &\ge e^{-p \cdot \dviTstar} (e^p - 1 + e^{-p} - 1) \\
      &>  e^{-p \cdot \dviTstar} (2 \sqrt{e^p \cdot e^{-p}} - 2)  \\
      &= 0
    \end{align*}
    where the strict inequality is true since the two terms can only be equal when $e^{p} = e^{-p}$ which does not happen for $p > 0$.
  \end{proof}
  Using Claim~\ref{clm:dvistar}, we can now show that any $\dvistar$ where $\duistar = \bMatch$, must be either $0$ or $\bMatch$. Suppose
  not. If $\dviOstar \in (0, \bMatch)$, then since $d(V)$ is an integer multiple of $\bMatch$, there must exists some other index $i_2$
  where $\dviTstar \in (0, \bMatch)$; hence, $0 < \min\set{\dviOstar, \dviTstar} < \max\set{\dviOstar, \dviTstar} < \bMatch$, a
  contradiction to Claim~\ref{clm:dvistar}.
\end{proof} 
\end{proof}

\newcommand{\LP}{\ensuremath{\textnormal{LP-(\ref{eq:lp})}}\xspace}
\section{An Algorithm for Large Values of $p$}\label{app:large-p}

In this section, we provide an algorithm, namely, Algorithm~\ref{alg:p-approximation}, with approximation ratio strictly better than $1/2$ when $p$ is bounded away from zero. In particular, this algorithm computes a 
matching of size $\paren{1/2 + \Theta(p^2)}\cdot \opt$. Algorithm~\ref{alg:p-approximation} is required to handle the case when $p$ is not small enough for
Algorithm~\ref{alg:small-p} to perform well. Using a combination of both of these algorithms, we
can prove the second part of Theorem~\ref{thm:main}.

Let $p_0$ be any fixed constant independent of $n$, $\delta = \frac{p^2}{4}$, and $\eps =\frac{\constp^2}{10^{4}}$. 
The new algorithm (i.e, Algorithm~\ref{alg:p-approximation}) is similar to Algorithm~\ref{alg:small-p} with the only difference being that instead of a $\bMatch$-matching, here, we simply pick a single maximum
matching in $G$. Our algorithm is stated as Algorithm~\ref{alg:p-approximation}.

\begin{algorithm}[h!]
\textnormal
\SetAlgoNoLine
\KwIn{A graph $G(V,E)$ and an edge realization probability  $p_0 \leq p < 1$.}
\KwOut{A subgraph $H(V,Q)$ of $G(V,E)$.}
\begin{enumerate}
\item Let $M$ be a maximum matching in $G$. 
\item Let $(M_1, M_2, \ldots, M_R) := \basicalg(G(V,E\setminus M),\eps)$ (recall that $\eps = \frac{\constp^2}{10^{4}}$), and
  $E_{MC} = M_1 \cup \ldots \cup M_R$.
\item Return $H(V,Q)$ where $Q:= M \cup E_{MC}$. 
\end{enumerate}
\caption{A $\paren{0.5+\Theta(p^2)}$-Approximation Algorithm for Stochastic Matching}
\label{alg:p-approximation}
\end{algorithm}

The following lemma proves the approximation ratio of Algorithm~\ref{alg:p-approximation}. 

\begin{lemma}\label{lem:p-approximation}
  For any constant $\constp > 0$, any realization probability $p \geq \constp$, and any graph $G(V,E)$ the expected maximum matching size
  in the graph $H$ computed by Algorithm~\ref{alg:p-approximation} is at least
  $\paren{\frac{1}{2}+\frac{p^2}{4} - \frac{\constp^2}{10^{4}}} \cdot {\OPT(G)}$.
\end{lemma}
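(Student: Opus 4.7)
The plan is to split into two cases, mirroring the proof of Part~(\ref{part:main1}) of Theorem~\ref{thm:main}. Set $\delta := p^2/4$ and note that $\eps = \constp^2/10^4 < \delta$. In the \emph{easy case} $\card{M_R} \le (1/2 - \delta)\OPT$, I would invoke Claim~\ref{clm:small-L} directly with $X := M$, which immediately yields $\Ex{\mu(H_p)} \ge (1/2 + \delta - \eps)\OPT = (1/2 + p^2/4 - \constp^2/10^4)\OPT$, matching the lemma's conclusion exactly.

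The interesting case is $\card{M_R} > (1/2 - \delta)\OPT$. Here Lemma~\ref{lem:basic-alg} produces, with probability $1-o(1)$, a matching $M^* \subseteq E_{MC,p}$ of size at least $(1-\eps)\card{M_R}$; if already $\card{M^*} \ge (1/2 + \delta - \eps)\OPT$ we are done, so assume $\card{M^*}$ is smaller. The key structural observation is that since $M$ and $M^*$ are edge-disjoint matchings, $M \cup M^*$ decomposes into vertex-disjoint paths and even alternating cycles; a standard counting argument (together with the fact that $M$ is a maximum matching in $G$, which rules out ``$M^*$-heavy'' odd paths) gives that the number of odd paths starting and ending with $M$-edges equals $\card{M} - \card{M^*} \ge \OPT - \card{M^*}$. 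Each such path is an augmenting path for $M^*$; if it has length $2k+1$ (so $k+1$ edges in $M$), it is realized in $M_p$ with independent probability $p^{k+1}$, and because these paths are vertex-disjoint all realized ones can be used simultaneously.

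Letting $N := \card{M} - \card{M^*}$ and summing over augmenting paths $P$, the expected increment over $\card{M^*}$ is at least $\sum_P p^{k_P + 1}$. Jensen's inequality applied to the convex function $x \mapsto p^x$, combined with $\sum_P (k_P + 1) \le \card{M}$, gives the lower bound $N \cdot p^{\card{M}/N}$. A direct derivative check shows this expression is monotone increasing in $\card{M}$ (for fixed $\card{M^*}$), so substituting $\card{M} \ge \OPT$ and $N \ge \OPT - \card{M^*}$ yields total expected matching size at least $\OPT \cdot f(t)$, where $t := \card{M^*}/\OPT$ and $f(t) := t + (1-t) p^{1/(1-t)}$.

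It remains to verify $f(t) \ge 1/2 + \delta - \eps$ throughout $t \in [t_{\min}, 1]$, where $t_{\min} := (1-\eps)(1/2 - \delta)$. A short calculation gives $f'(t) > 0$ on $[0,1)$, reducing the task to bounding $f(t_{\min})$ from below; using $1 - t_{\min} \ge 1/2$ (hence $p^{1/(1-t_{\min})} \ge p^2$) and expanding with $\delta = p^2/4$ yields $f(t_{\min}) \ge 1/2 + p^2/4 + p^4/4 - O(\eps)$, comfortably above the target $1/2 + p^2/4 - \constp^2/10^4$. The $o(1)$-probability loss from Lemma~\ref{lem:basic-alg} is absorbed into this slack since $\OPT = \omega(1)$. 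The main obstacle I foresee is merely careful algebraic bookkeeping in this last verification; the entire conceptual content lies in the augmenting-path-plus-Jensen reduction.
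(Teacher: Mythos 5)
Your proposal is correct, and in the easy case ($\card{M_R} \le (1/2-\delta)\OPT$, handled via Claim~\ref{clm:small-L} with $X=M$) it coincides with the paper. In the main case, however, you take a genuinely different route. The paper also augments the realized matching $M'$ from $E_{MC}$ with the independent edges of the maximum matching $M$, and also uses maximality of $M$ to count augmenting paths (Claim~\ref{clm:augmenting-paths}), but it then \emph{discards} all augmenting paths of length at least $5$, keeps only the length-$1$ and length-$3$ counts $\alpha_1,\alpha_3$, and lower-bounds the increment $\alpha_1 p + \alpha_3 p^2$ by the linear program \LP, solved with a case split at $p=1/2$, yielding increment at least $\frac{p^2}{2}\OPT$. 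You instead keep \emph{all} $N = \card{M}-\card{M'}$ augmenting paths, observe that a path with $k+1$ $M$-edges realizes with probability $p^{k+1}$, and apply Jensen's inequality to the convex, decreasing function $x\mapsto p^x$ under the budget $\sum_P(k_P+1)\le \card{M}$ to get increment at least $N p^{\card{M}/N}$; your monotonicity checks (of $(m-s)p^{m/(m-s)}$ in $m$, and of $f(t)=t+(1-t)p^{1/(1-t)}$ in $t$, the latter reducing to $e^{-y}(1+y)<1$ for $y>0$) are both valid, and your evaluation at $t_{\min}=(1-\eps)(1/2-\delta)$ with $\delta=p^2/4$ indeed clears the target with slack $p^4/4+\eps/2$. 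What each approach buys: the paper's LP argument is purely arithmetic, with no calculus and only two explicit extreme points to check; your Jensen bound avoids the case analysis on $p$, is never weaker than the paper's $(1/2+\delta)p^2\ge p^2/2$ increment, and is strictly stronger for $p$ close to $1$ (where long augmenting paths still realize with non-negligible probability, which the paper throws away), at the cost of the two monotonicity verifications. The probabilistic bookkeeping you describe (independence of $M$ from $E_{MC}$, vertex-disjointness of the paths, absorbing the $1-o(1)$ event using $\OPT=\omega(1)$) matches the paper's treatment.
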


Before proving Lemma~\ref{lem:p-approximation}, we show how to combine Algorithm~\ref{alg:small-p} and Algorithm~\ref{alg:p-approximation}
to prove Part~(\ref{part:main2}) of Theorem~\ref{thm:main}.
\begin{proof}[Proof of Theorem~\ref{thm:main}, Part~(\ref{part:main2})]
  Let $p_0$ be the constant such that Algorithm~\ref{alg:small-p} achieves an approximation ratio of $0.52$ for any $p \leq p_0$. The algorithm
  for Part~(\ref{part:main2}) is simply as follows. If the realization probability $p \leq p_0$, run Algorithm~\ref{alg:small-p} and
  otherwise run Algorithm~\ref{alg:p-approximation}.  By Lemma~\ref{lem:p-approximation}, the approximation ratio of this algorithm is
  $\min\set{0.52 , \frac{1}{2} + \frac{p^2}{4} - \frac{\constp^2}{10^4}} = 0.5 + \delta_0$ for some absolute constant $\delta_0$ (since
  $p_0$ is an absolute constant and $p \geq p_0$).
	
  We note that by optimizing the choice of $p_0$ and a more careful analysis of Algorithm~\ref{alg:small-p} (to account for many constants
  involved), one can bound the value of $\delta_0 \approx 0.001$. We omit the tedious details of this calculation as it is not the main
  contribution of this paper.
\end{proof}

We now prove Lemma~\ref{lem:p-approximation}. 

\begin{proof}[Proof of Lemma~\ref{lem:p-approximation}]
  Recall that $\opt$ (resp. $\ALG$) is the expected maximum matching size in a realization $G_p$ of $G$ (resp. a realization $H_p$ of $H$).
	
  Firstly, by Claim~\ref{clm:small-L}, with the parameters $\eps$, $\delta$, and $X = M$, we have that if $\card{M_R}$ in $E_{MC}$ is
  smaller than $({1 \over 2} - \delta)\cdot\opt$, then the expected matching size in $G(V,Q)$ is at least
  $({1 \over 2} + \delta - \eps)\cdot\opt = ({1 \over 2} + {p^2 \over 4} - \frac{\constp^2}{10^{4}}) \cdot \opt$, which proves the lemma. We now
  consider the case where $\card{M_R} \geq ({1 \over 2} - \delta)\cdot\opt$.
  
  Let $M'$ be the random variable denotes a maximum matching in a realization of $E_{MC}$ (breaking tie arbitrarily). By
  Lemma~\ref{lem:basic-alg}, w.h.p., $\card{M'} \ge (1-\eps)\card{M_R} \geq (\frac{1}{2} - \delta - \eps)\cdot\opt$.  For
  simplicity, in the following, we always assume this event happens\footnote{This assumption can be removed while losing a negligible factor
    of $o(1)$ in the size of final matching.}  and further remove any extra edges in $M'$ so that
  $\card{M'} = (\frac{1}{2} - \delta - \eps)\cdot\opt$.  We now use the matching $M$ chosen in the first step of the algorithm (which is
  a maximum matching of $G$) to augment the matching $M'$. We should point out that at this point, $M'$ refers to a realized matching, while
  $M$ is still a random variable (independent of $M'$ since $M$ and $E_{MC}$ are edge-disjoint).
	
  Let $\alpha_1,\alpha_3$ and $\alpha_{\geq 5}$ denote, respectively, the number of augmenting paths (w.r.t. $M'$) of length $1$, $3$, and
  at least $5$ in $M \bigtriangleup M'$.  We have the following claim. The proof uses standard facts about the augmenting paths (see,
  e.g.,~\cite{HopcroftK73}).
	\begin{claim}\label{clm:augmenting-paths}
		For $\alpha_1,\alpha_3$, and $\alpha_{\geq 5}$, defined as above: 
		\begin{align*}
			\alpha_3 + 2\alpha_{\geq 5} &\leq \card{M'} \\
			\alpha_1 + \alpha_3 + \alpha_{\geq 5} &= \card{M} - \card{M'}
		\end{align*}
              \end{claim}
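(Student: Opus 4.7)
The plan is to use the standard structure of the symmetric difference of two matchings. First, I would observe that $M \bigtriangleup M'$ decomposes into vertex-disjoint connected components, each of which is either an even alternating cycle or an alternating path whose edges come alternately from $M$ and $M'$. An alternating path in $M \bigtriangleup M'$ is \emph{$M'$-augmenting} iff both of its endpoint edges belong to $M$; similarly it would be \emph{$M$-augmenting} if both endpoint edges belong to $M'$. Any remaining path has one endpoint in each and carries equal numbers of $M$-edges and $M'$-edges. By definition, the quantities $\alpha_1,\alpha_3,\alpha_{\geq 5}$ enumerate exactly the $M'$-augmenting paths of the corresponding lengths.

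For the first inequality, I would note that an $M'$-augmenting path of length $2k{+}1$ contains exactly $k$ edges of $M'$ (and $k{+}1$ edges of $M$). Thus a length-$3$ augmenting path uses $1$ edge of $M'$ and a length-$\geq 5$ augmenting path uses at least $2$ edges of $M'$. Because the augmenting paths are vertex-disjoint, the $M'$-edges they consume are all distinct, giving
\[ \alpha_3 + 2\alpha_{\geq 5} \;\leq\; |M'|. \]

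For the equality, I would compute $|M|-|M'|$ by summing contributions over the components of $M \bigtriangleup M'$. Each even cycle and each ``balanced'' alternating path contributes $0$; each $M'$-augmenting path contributes $+1$; each $M$-augmenting path (if any) would contribute $-1$. The key step — and the only place where a nontrivial hypothesis enters — is to rule out $M$-augmenting paths. Since $M \bigtriangleup M' \subseteq E$ and $M$ is a maximum matching in $G$, the existence of an $M$-augmenting path $P$ would make $M \bigtriangleup P$ a larger matching in $G$ than $M$, contradicting maximality. Therefore the imbalance equals the number of $M'$-augmenting paths, yielding $|M|-|M'| = \alpha_1+\alpha_3+\alpha_{\geq 5}$.

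I do not expect any serious obstacle here; this is a standard alternating-paths argument. The one subtlety to be careful about is that the maximality of $M$ is used inside $G$ (not within the realization), which is fine since $M$ was chosen as a maximum matching of the full input graph $G$ in Step~1 of Algorithm~\ref{alg:p-approximation} and $M \bigtriangleup M' \subseteq E$.
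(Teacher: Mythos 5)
Your proof is correct and follows essentially the same route as the paper: the paper also counts $M'$-edges inside the disjoint length-$3$ and length-$\geq 5$ augmenting paths for the inequality, and derives the equality from the maximality of $M$ in $G$ together with the standard decomposition of $M \bigtriangleup M'$ (the standard facts it cites to~\cite{HopcroftK73}), which you have simply spelled out in more detail by explicitly ruling out $M$-augmenting components.
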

              \begin{proof}
                Any augmenting path of length $3$ has one edge in $M'$ and any augmenting path of length at least $5$ has at least two edges
                in $M'$. Since the augmenting paths are edge disjoints, the first inequality follows. The second inequality follows from the
                fact that $M$ is a maximum matching in $G$ and each augmenting path in $M \bigtriangleup M'$ increases the size of $M'$ by
                $1$.
              \end{proof}
              As stated earlier, each edge in $M$ is realized w.p. $p$ (independent of the choice of $M'$).  Since an augmenting path of
              length $1$ (resp. of length $3$) realizes in $M' \bigtriangleup M_p$ w.p. $p$ (resp. $p^2$), we have that the expected number
              of times that $M'$ can be augmented using realized edges of $M$ is at least $\alpha_1 p + \alpha_3 p^2$, implying that the
              final matching size is at least $(\frac{1}{2} - \delta - \eps)\cdot\opt + \alpha_1 p + \alpha_3 p^2 $ in
              expectation. Combining this with Claim~\ref{clm:augmenting-paths}, the minimum size of the output matching we obtain can be
              formulated as the following linear program (denoted by \LP):
	 \begin{tbox}
 \begin{equation}
\begin{array}{ll@{}ll}
\text{minimize}  & \alpha_1 p + \alpha_3 p^2  &\\
\text{subject to}& \alpha_3 + 2\alpha_{\geq 5} \leq (\frac{1}{2} - \delta)\opt - \eps\cdot\opt    & \label{eq:lp}\\
			&     \alpha_1 + \alpha_3 + \alpha_{\geq 5} \geq (\frac{1}{2} + \delta)\opt + \eps\cdot\opt & \\%\label{eq:lp-constraint2} \\
                 & \alpha_1,\alpha_3,\alpha_{\geq 5} \geq 0 
\end{array}
\end{equation}
\end{tbox}
	
%%	\begin{gather*}
%%		\min_{\alpha_1,\alpha_3,\alpha_5}\quad \alpha_1 p + \alpha_3 p^2  \\
%%	\begin{aligned}
%%		\textup{s.t.} \quad \alpha_3 + 2\alpha_{\geq 5} \leq (\frac{1}{2} - \delta)\opt - \eps\cdot\opt \\
%%                 \alpha_1 + \alpha_3 + \alpha_{\geq 5} \geq (\frac{1}{2} + \delta)\opt + \eps\cdot\opt \\
%%		\end{aligned}
%%		\end{gather*}
	where in the second constraint, we use the fact that $M$ is a maximum matching in $G$ and hence $\card{M} \geq \opt$. We have the following claim. 
	
	\begin{claim}\label{clm:lp-minimizer}
		The minimum value of $\LP$ is at least ${\frac{p^2}{2}} \cdot \opt$.
	\end{claim}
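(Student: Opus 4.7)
Since \LP is a bounded linear program with three variables and two functional constraints, its minimum is attained at a basic feasible solution (vertex), so the plan is to enumerate the vertices and verify the objective at each. Writing $A := \paren{\tfrac{1}{2} - \delta - \eps}\opt$ and $B := \paren{\tfrac{1}{2} + \delta + \eps}\opt$, I would first argue that every vertex must have $\alpha_1 > 0$: setting $\alpha_1 = 0$ with both functional constraints tight yields $\alpha_{\geq 5} = A - B < 0$ (using $B > A$), while any other way of forcing $\alpha_1 = 0$ together with two non-negativity constraints violates one of the functional inequalities. Hence there are exactly three candidate vertices:
\begin{align*}
V_1 = (B, 0, 0), \quad V_2 = (B - A,\, A,\, 0), \quad V_3 = (B - A/2,\, 0,\, A/2),
\end{align*}
with objective values $pB$, $p(B - A) + p^2 A$, and $p(B - A/2)$, respectively.

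Since $B - A = 2(\delta + \eps)\opt > 0$, the objective at $V_1$ strictly exceeds those at $V_2$ and $V_3$, so the minimum is achieved at $V_2$ or $V_3$. A short comparison shows that the $V_2$ objective is at most the $V_3$ objective precisely when $p \leq 1/2$, so I would split the argument at the threshold $p = 1/2$.

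For $p < 1/2$, the minimum equals the $V_2$ objective. Expanding using $B - A = 2(\delta + \eps)\opt$ and $A = \paren{\tfrac{1}{2} - \delta - \eps}\opt$, one obtains
\begin{align*}
p(B - A) + p^2 A \;=\; \tfrac{p^2}{2}\opt + p(2 - p)(\delta + \eps)\opt \;\geq\; \tfrac{p^2}{2}\opt,
\end{align*}
where the inequality uses $p \in (0,1)$ and $\delta, \eps > 0$.

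For $p \geq 1/2$, the minimum equals the $V_3$ objective, which simplifies to $\tfrac{p}{4}\opt + \tfrac{3p(\delta + \eps)}{2}\opt$. Substituting $\delta = p^2/4$ and discarding the (positive) $\eps$-term, the target inequality reduces to $\tfrac{p}{4} + \tfrac{3p^3}{8} \geq \tfrac{p^2}{2}$, equivalently $3p^2 - 4p + 2 \geq 0$; this holds for every real $p$ since the discriminant is $16 - 24 = -8 < 0$. The only non-mechanical step is identifying the clean case split at $p = 1/2$; once the three vertices are in hand, the remainder is direct algebra.
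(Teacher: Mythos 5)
Your proof is correct, and it takes a somewhat different route from the paper's. You solve \LP exactly by enumerating basic feasible solutions: after ruling out vertices with $\alpha_1=0$, the only candidates are $V_1=(B,0,0)$, $V_2=(B-A,A,0)$, $V_3=(B-A/2,0,A/2)$, the minimizer is $V_2$ for $p\le \frac{1}{2}$ and $V_3$ for $p\ge \frac{1}{2}$, and the bound $\frac{p^2}{2}\cdot\opt$ follows by direct algebra (your discriminant check $3p^2-4p+2\ge 0$ in the second case). The paper instead \emph{relaxes} the LP: it aggregates the two functional constraints into the single inequality $2\alpha_1+\alpha_3\ \ge\ \paren{\frac{1}{2}+3\delta+3\eps}\opt$, thereby eliminating $\alpha_{\geq 5}$, and minimizes the objective under that one constraint; the same threshold $p=\frac{1}{2}$ appears there as the point at which $\alpha_3$ stops being the cheaper variable per unit of constraint, and the $p>\frac{1}{2}$ case is closed with an AM--GM inequality rather than a discriminant. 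The relaxation yields a slightly weaker (but sufficient) lower bound with less bookkeeping, whereas your enumeration computes the exact LP optimum, $\frac{p^2}{2}\opt+p(2-p)(\delta+\eps)\opt$ for $p\le\frac{1}{2}$. Two cosmetic repairs: the feasible region of \LP is not bounded ($\alpha_1$ can be arbitrarily large), so attainment at a vertex should be justified by noting that the objective has nonnegative coefficients (hence the optimum is finite) and the region is pointed, lying in the nonnegative orthant, rather than by boundedness; and the suboptimality of $V_1$ follows from $A>0$ and $p<1$ (the gaps are $pA(1-p)$ and $pA/2$), not from $B-A>0$, which is only needed for the feasibility of $V_2$. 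Neither affects the conclusion.
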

	\begin{proof}
		The two constraints of \LP imply that, 
		\begin{align}
			2\alpha_1 + \alpha_3 \geq \paren{\frac{1}{2} + 3\delta + 3\eps} \cdot \opt  \label{eq:constraint-both}
		\end{align}
		
		Suppose we want to minimize $\alpha_1 p + \alpha_3 p^2$ subject to the constraint in Eq~(\ref{eq:constraint-both}) (this is clearly a lower bound for the value of \LP). In this case, 
		since the contribution of $\alpha_3$ to the objective value is $p$ times the contribution of $\alpha_1$, while its contribution to the constraint is $\frac{1}{2}$ times the contribution of $\alpha_1$, it is 
		straightforward to verify that for $p \leq 1/2$, there is an optimal solution with $\alpha_1 = 0$, and for $p > 1/2$, there is an optimal solution with $\alpha_3 = 0$. We can now compute the value of 
		solution in each case: 
		
		\textbf{\bm{$p \leq \frac{1}{2}$} case.} In this case $\alpha_1 = 0$ and $\alpha_3 = \paren{\frac{1}{2} + 3\delta + 3\eps} \cdot \opt$ minimizes $\alpha_1 p + \alpha_3 p^2$. Hence, the objective value is 
		\begin{align*}
			\alpha_3 \cdot p^2 =  \paren{\frac{1}{2} + 3\delta + 3\eps} \cdot \opt \cdot p^2 \geq {\frac{p^2}{2}} \cdot \opt
		\end{align*}
		
		\textbf{\bm{$p > \frac{1}{2}$} case.} In this case $\alpha_1 = \paren{\frac{1}{4} + \frac{3}{2}\delta + \frac{3}{2} \eps} \cdot \opt$ and $\alpha_3 = 0$ minimizes $\alpha_1 p + \alpha_3 p^2$. Hence, the objective value is
		\begin{align*}
			\alpha_1 \cdot p &=  \paren{\frac{1}{4} + \frac{3}{2}\delta + \frac{3}{2}\eps}p \cdot \opt \geq \paren{\frac{1}{4} + \frac{3}{2}\delta}p \cdot \opt \\
			&= \paren{\frac{1}{4} + \frac{3p^2}{8}}p \cdot \opt  \geq {\frac{p^2}{2}} \cdot \opt \tag{$\delta = \frac{p^2}{4}$
                          and $\frac{1}{4} + \frac{3p^2}{8} \ge 2\sqrt{\frac{1}{4} \cdot \frac{3p^2}{8}} \ge \frac{p}{2}$}
		\end{align*}
		The claim now follows since in above calculation we \emph{relaxed} constraints of \LP to the constraint in Eq~(\ref{eq:constraint-both}).
	\end{proof}
	
	By plugging in the bound from Claim~\ref{clm:lp-minimizer}, we obtain that the final matching size is at least: 
	\begin{align*}
		\frac{\opt}{2}  - \delta\cdot\opt - \eps \cdot\opt + \alpha_1 p + \alpha_3 p^2  &\geq  \paren{{1 \over 2}  - \delta  -\eps +
                                                                                                  \frac{p^2}{2}} \cdot \opt  \\
		&= \paren{\frac{1}{2}+\frac{p^2}{4} - \frac{\constp^2}{10^4} }\cdot \opt 
	\end{align*}
	by plugging in $\delta = \frac{p^2}{4}$ and $\eps = \frac{\constp^2}{10^4}$. 		
\end{proof}

\section{Concluding Remarks and Open Problems}\label{sec:conc}
We presented the first non-adaptive algorithm for stochastic matching with an approximation ratio that is strictly better than half. In
particular, we showed that any graph $G$ has a subgraph $H$ with maximum degree $O(\frac{\log{(1/p)}}{p})$ such that the ratio of expected
size of a maximum matching in realizations of $H$ and $G$ is at least $0.52$ when $p$ is sufficiently small, i.e., case of vanishing probabilities, and $0.5+\delta_0$ (for an
absolute constant $\delta_0 > 0$) for any $p \in (0,1)$.

A main open problem is to determine the best approximation ratio achievable by a non-adaptive algorithm. In particular, can non-adaptive
algorithms qualitatively match the performance of adaptive algorithms by achieving a $(1-\eps)$-approximation for any $\eps > 0$ using a
subgraph with maximum degree $f(\eps,p)$ for some function $f$? In the following, we mention some potential directions towards resolving this problem. 

\paragraph{A barrier to obtaining a $(1-\eps)$-approximation.} We briefly explain here a barrier to a $(1-\eps)$-approximation algorithm that was noted in~\cite{ECpaper}. 
 It was shown in~\cite{ECpaper} that any (non-adaptive) $(1-\eps)$-approximation algorithm for stochastic matching needs to solve the following problem. 

\begin{problem*}[\!\cite{ECpaper}]
  Suppose you are given a bipartite graph $G(L, R, E)$ ($\card{L} = \card{R} = n$) with the property that the expected maximum matching size
  between two \emph{uniformly at random chosen} subsets $A \subseteq L$ and $B \subseteq R$ with $\card{A} = \card{B} = n/3$, is
  $n/3 - o(n)$. The goal is to compute a subgraph $H(L,R,Q)$ with max-degree of $O(1)$, such that the expected size of a maximum matching
  between two randomly chosen subsets $A$ and $B$ is $\Omega(n)$.
\end{problem*}

For the harder problem in which the two subsets $A$ and $B$ are chosen \emph{adversarially}, it is known that there exist graphs (in
particular, a \rs graph; see, e.g.~\cite{AlonMS12,FischerLNRRS02}) that admit no such sparse subgraph $H$ (see~\cite{ECpaper} for more
details).  However, in the stochastic matching application, our interest is in \emph{randomly} chosen subsets $A$ and $B$, and it is not
known if there are instances such that the random set version of the problem is hard.

%%\paragraph{Improving the $b$-matching lemma.} Our $b$-matching lemma (Lemma~\ref{lem:b-matching}) states that having a
%%$\frac{1}{p}$-matching of size (roughly) $\frac{\opt}{p}$ is a \emph{necessary} condition for any graph $G$ to have expected maximum
%%matching size $\opt$. It is however easy to show that this condition is not \emph{sufficient}.  This leaves the possibility of refining this
%%lemma with improved bounds, which would shed more lights into the structure of the problem. In addition, such a result would also improve
%%the approximation ratio of our Algorithm~\ref{alg:small-p} almost in a black-box manner.

\paragraph{A direct application of $b$-matching lemma.}  There is another possible way of utilizing the $b$-matching lemma. In
Lemma~\ref{lem:outside-edges}, we showed that for any $\frac{1}{p}$-matching $B$ of size $\frac{\opt}{p}$, the expected maximum matching
size of a realization of $B$ is at least $\frac{\opt}{3}$. In fact, using a more careful analysis, we can improve this bound to
$\approx 0.4 \cdot \opt$. This, together with our $b$-matching lemma, immediately implies a simple $0.4$-approximation algorithm for
stochastic matching. However, it is not clear to us whether this bound can be significantly improved to get a matching of size strictly
more than $\frac{\opt}{2}$.  It is worth mentioning that using a result of Karp and Sipser~\cite{KarpS81} on \emph{sparse random graphs}
(see also~\cite{AronsonFP98}, Theorem~4), one can show that if the $\frac{1}{p}$-matching itself is chosen \emph{randomly}, then its
realizations contain a matching of size $\approx 0.56 \cdot \opt$ in expectation. However, this result relies heavily on the fact that the
original graph (in our case a realization of a random $\frac{1}{p}$-matching) is chosen randomly, and it seems unlikely that a similar
result holds for an \emph{adversarially} chosen $\frac{1}{p}$-matching.

% Bibliography
\bibliographystyle{abbrv}
\bibliography{general}

\begin{thebibliography}{10}

\bibitem{Adamczyk11}
M.~Adamczyk.
\newblock Improved analysis of the greedy algorithm for stochastic matching.
\newblock {\em Inf. Process. Lett.}, 111(15):731--737, 2011.

\bibitem{AkbarpourLG14}
M.~Akbarpour, S.~Li, and S.~O. Gharan.
\newblock Dynamic matching market design.
\newblock In {\em {ACM} Conference on Economics and Computation, {EC} '14,
  Stanford , CA, USA, June 8-12, 2014}, page 355, 2014.

\bibitem{AlonMS12}
N.~Alon, A.~Moitra, and B.~Sudakov.
\newblock Nearly complete graphs decomposable into large induced matchings and
  their applications.
\newblock In {\em Proceedings of the 44th Symposium on Theory of Computing
  Conference, {STOC} 2012, New York, NY, USA, May 19 - 22, 2012}, pages
  1079--1090, 2012.

\bibitem{AndersonAGK15}
R.~Anderson, I.~Ashlagi, D.~Gamarnik, and Y.~Kanoria.
\newblock A dynamic model of barter exchange.
\newblock In {\em Proceedings of the Twenty-Sixth Annual {ACM-SIAM} Symposium
  on Discrete Algorithms, {SODA} 2015, San Diego, CA, USA, January 4-6, 2015},
  pages 1925--1933, 2015.

\bibitem{Anderson20012015}
R.~Anderson, I.~Ashlagi, D.~Gamarnik, and A.~E. Roth.
\newblock Finding long chains in kidney exchange using the traveling salesman
  problem.
\newblock {\em Proceedings of the National Academy of Sciences},
  112(3):663--668, 2015.

\bibitem{AronsonDFS95}
J.~Aronson, M.~E. Dyer, A.~M. Frieze, and S.~Suen.
\newblock Randomized greedy matching {II}.
\newblock {\em Random Struct. Algorithms}, 6(1):55--74, 1995.

\bibitem{AronsonFP98}
J.~Aronson, A.~M. Frieze, and B.~Pittel.
\newblock Maximum matchings in sparse random graphs: Karp-sipser revisited.
\newblock {\em Random Struct. Algorithms}, 12(2):111--177, 1998.

\bibitem{ECpaper}
S.~Assadi, S.~Khanna, and Y.~Li.
\newblock The stochastic matching problem with (very) few queries.
\newblock {\em Proceedings of the 2016 {ACM} Conference on Economics and
  Computation, {EC} '16, Maastricht, The Netherlands, July 24-28, 2016}, pages
  43--60, 2016.

\bibitem{AwasthiS09}
P.~Awasthi and T.~Sandholm.
\newblock Online stochastic optimization in the large: Application to kidney
  exchange.
\newblock In {\em {IJCAI} 2009, Proceedings of the 21st International Joint
  Conference on Artificial Intelligence, 2009}, pages 405--411, 2009.

\bibitem{BansalGLMNR12}
N.~Bansal, A.~Gupta, J.~Li, J.~Mestre, V.~Nagarajan, and A.~Rudra.
\newblock When {LP} is the cure for your matching woes: Improved bounds for
  stochastic matchings.
\newblock {\em Algorithmica}, 63(4):733--762, 2012.

\bibitem{BlumDHPSS15}
A.~Blum, J.~P. Dickerson, N.~Haghtalab, A.~D. Procaccia, T.~Sandholm, and
  A.~Sharma.
\newblock Ignorance is almost bliss: Near-optimal stochastic matching with few
  queries.
\newblock In {\em Proceedings of the Sixteenth {ACM} Conference on Economics
  and Computation, {EC} '15, Portland, OR, USA, June 15-19, 2015}, pages
  325--342, 2015.

\bibitem{BlumGPS13}
A.~Blum, A.~Gupta, A.~D. Procaccia, and A.~Sharma.
\newblock Harnessing the power of two crossmatches.
\newblock In {\em {ACM} Conference on Electronic Commerce, {EC} '13,
  Philadelphia, PA, USA, June 16-20, 2013}, pages 123--140, 2013.

\bibitem{Bollobas2001}
B.~Bollob{\'a}s.
\newblock {\em Random Graphs}.
\newblock Number~73. Cambridge University Press, 2001.

\bibitem{ChanCWZ14}
T.~H. Chan, F.~Chen, X.~Wu, and Z.~Zhao.
\newblock Ranking on arbitrary graphs: Rematch via continuous {LP} with
  monotone and boundary condition constraints.
\newblock In {\em Proceedings of the Twenty-Fifth Annual {ACM-SIAM} Symposium
  on Discrete Algorithms, {SODA} 2014, Portland, Oregon, USA, January 5-7,
  2014}, pages 1112--1122, 2014.

\bibitem{ChenIKMR09}
N.~Chen, N.~Immorlica, A.~R. Karlin, M.~Mahdian, and A.~Rudra.
\newblock Approximating matches made in heaven.
\newblock In {\em Automata, Languages and Programming, 36th International
  Colloquium, {ICALP} 2009, Rhodes, Greece, July 5-12, 2009, Proceedings, Part
  {I}}, pages 266--278, 2009.

\bibitem{CostelloTT12}
K.~P. Costello, P.~Tetali, and P.~Tripathi.
\newblock Stochastic matching with commitment.
\newblock In {\em Automata, Languages, and Programming - 39th International
  Colloquium, {ICALP} 2012, Proceedings, Part {I}}, pages 822--833, 2012.

\bibitem{DickersonPS12}
J.~P. Dickerson, A.~D. Procaccia, and T.~Sandholm.
\newblock Dynamic matching via weighted myopia with application to kidney
  exchange.
\newblock In {\em Proceedings of the Twenty-Sixth {AAAI} Conference on
  Artificial Intelligence.}, 2012.

\bibitem{DickersonPS13}
J.~P. Dickerson, A.~D. Procaccia, and T.~Sandholm.
\newblock Failure-aware kidney exchange.
\newblock In {\em {ACM} Conference on Electronic Commerce, {EC} '13.}, pages
  323--340, 2013.

\bibitem{DickersonS15}
J.~P. Dickerson and T.~Sandholm.
\newblock Futurematch: Combining human value judgments and machine learning to
  match in dynamic environments.
\newblock In {\em Proceedings of the Twenty-Ninth {AAAI} Conference on
  Artificial Intelligence, January 25-30, 2015, Austin, Texas, {USA.}}, pages
  622--628, 2015.

\bibitem{DyerF91}
M.~E. Dyer and A.~M. Frieze.
\newblock Randomized greedy matching.
\newblock {\em Random Struct. Algorithms}, 2(1):29--46, 1991.

\bibitem{FKMSZ05}
J.~Feigenbaum, S.~Kannan, A.~McGregor, S.~Suri, and J.~Zhang.
\newblock On graph problems in a semi-streaming model.
\newblock {\em Theor. Comput. Sci.}, 348(2-3):207--216, 2005.

\bibitem{FischerLNRRS02}
E.~Fischer, E.~Lehman, I.~Newman, S.~Raskhodnikova, R.~Rubinfeld, and
  A.~Samorodnitsky.
\newblock Monotonicity testing over general poset domains.
\newblock In {\em Proceedings on 34th Annual {ACM} Symposium on Theory of
  Computing.}, pages 474--483, 2002.

\bibitem{GuptaN13}
A.~Gupta and V.~Nagarajan.
\newblock A stochastic probing problem with applications.
\newblock In {\em Integer Programming and Combinatorial Optimization - 16th
  International Conference, {IPCO} 2013. Proceedings}, pages 205--216, 2013.

\bibitem{HopcroftK73}
J.~E. Hopcroft and R.~M. Karp.
\newblock An n\({}^{\mbox{5/2}}\) algorithm for maximum matchings in bipartite
  graphs.
\newblock {\em {SIAM} J. Comput.}, 2(4):225--231, 1973.

\bibitem{KarpS81}
R.~M. Karp and M.~Sipser.
\newblock Maximum matchings in sparse random graphs.
\newblock In {\em 22nd Annual Symposium on Foundations of Computer Science,
  Nashville, Tennessee, USA, 28-30 October 1981}, pages 364--375, 1981.

\bibitem{KVV90}
R.~M. Karp, U.~V. Vazirani, and V.~V. Vazirani.
\newblock An optimal algorithm for on-line bipartite matching.
\newblock In {\em Proceedings of the 22nd Annual {ACM} Symposium on Theory of
  Computing, May 13-17, 1990, Baltimore, Maryland, {USA}}, pages 352--358,
  1990.

\bibitem{KonradMM12}
C.~Konrad, F.~Magniez, and C.~Mathieu.
\newblock Maximum matching in semi-streaming with few passes.
\newblock In {\em Approximation, Randomization, and Combinatorial Optimization.
  Algorithms and Techniques - 15th International Workshop, {APPROX} 2012, and
  16th International Workshop, {RANDOM} 2012, Cambridge, MA, USA, August 15-17,
  2012. Proceedings}, pages 231--242, 2012.

\bibitem{lovasz2009matching}
L.~Lov{\'a}sz and D.~Plummer.
\newblock {\em Matching Theory}.
\newblock AMS Chelsea Publishing Series. American Mathematical Soc., 2009.

\bibitem{ManloveO14}
D.~F. Manlove and G.~O'Malley.
\newblock Paired and altruistic kidney donation in the {UK:} algorithms and
  experimentation.
\newblock {\em {ACM} Journal of Experimental Algorithmics}, 19(1), 2014.

\bibitem{MehtaP12}
A.~Mehta and D.~Panigrahi.
\newblock Online matching with stochastic rewards.
\newblock In {\em 53rd Annual {IEEE} Symposium on Foundations of Computer
  Science, {FOCS} 2012, New Brunswick, NJ, USA, October 20-23, 2012}, pages
  728--737, 2012.

\bibitem{MehtaWZ15}
A.~Mehta, B.~Waggoner, and M.~Zadimoghaddam.
\newblock Online stochastic matching with unequal probabilities.
\newblock In {\em Proceedings of the Twenty-Sixth Annual {ACM-SIAM} Symposium
  on Discrete Algorithms, {SODA} 2015, San Diego, CA, USA, January 4-6, 2015},
  pages 1388--1404, 2015.

\bibitem{PoloczekS12}
M.~Poloczek and M.~Szegedy.
\newblock Randomized greedy algorithms for the maximum matching problem with
  new analysis.
\newblock In {\em 53rd Annual {IEEE} Symposium on Foundations of Computer
  Science, {FOCS} 2012, New Brunswick, NJ, USA, October 20-23, 2012}, pages
  708--717, 2012.

\bibitem{COBook}
A.~Schrijver.
\newblock {\em Combinatorial optimization: polyhedra and efficiency}.
\newblock Springer Science \& Business Media, 2003.

\bibitem{Unver10}
U.~Unver.
\newblock Dynamic kidney exchange.
\newblock {\em Review of Economic Studies}, 77(1):372--414, 2010.

\bibitem{Vizing64}
V.~G. Vizing.
\newblock On an estimate of the chromatic class of a p-graph.
\newblock {\em Diskret. Analiz}, 3(7):25--30, 1964.

\end{thebibliography}

\clearpage
\appendix

\section{Omitted Proofs from Section~3}\label{app:prelim}

\subsection{Proof of Proposition~\ref{prop:upper-exp}}\label{app:upper-exp}
\begin{proof}
	We first have $f(x)$ is \emph{monotonically decreasing}, since, 
	\begin{align*}
		\frac{d f}{d x} = \frac{e^{-x} \cdot x - 1+e^{-x}}{x^2} = \frac{(x+1)\cdot e^{-x} - 1}{x^2} \leq \frac{e^{x}\cdot e^{-x} - 1}{x^2} = 0 
	\end{align*}
	where we used the inequality $(1+x) \leq e^{x}$. 
	
  Consequently, since $x \le c$, 
  \begin{align*}
    f(c) \le f(x) = {1 - e^{-x} \over x}
  \end{align*}
  which implies $e^{-x} \le 1 - f(c) \cdot x$.
\end{proof}

\subsection{Proof of Proposition~\ref{prop:upper-exp-2}}\label{app:upper-exp-2}
\begin{proof}
  We first exam the equivalent conditions for the target inequality.
  \begin{align*}
    &                             &(1-x)^{{1\over x}} \ge {1 - x \over e} \\ 
    &\Longleftrightarrow &     (1-x)^{{1\over x} - 1} \ge {1 \over e} \\
    &\Longleftrightarrow &     ({1\over x} - 1)\ln(1-x) \ge -1 \tag{by taking natural log of both sides}
  \end{align*}
  Now, since $\ln(1 - x) \ge -x - {x^2 \over 2} - {x^3 \over 2}$ when $x \in (0,0.43]$. We have 
  \begin{align*}
    ({1\over x} - 1)\ln(1-x) &\ge ({1\over x} - 1) (-x - {x^2 \over 2} - {x^3 \over 2}) \tag{since $({1\over x} - 1) > 0$} \\
    & = -1 - {x \over 2} - {x^2 \over 2} + x + {x^2 \over 2} + {x^3 \over 2} \\
    & = - 1 + {x \over 2} + {x^3 \over 2} \\
    & \ge -1
  \end{align*}
  which completes the proof.
\end{proof}

\newcommand{\cstar}{\ensuremath{c^{\star}}}

\section{The Optimality of the $b$-Matching Lemma} \label{app:b-limitation}

In this section, we establish that our $b$-matching lemma is essentially optimal in the sense that it is impossible to find a $b$-matching
with at least $b\cdot \opt(G)$ edge for $b$ much larger than $1/p$. In particular, we show that,

\begin{claim*}
  For any constant $0 < p < 1$, there exist bipartite graphs $G$ where $G_p$ has a matching of size $n-o(n)$ in expectation, but for any
  $b \geq {2 \over p}$, there is no $b$-matching in $G$ with (at least) $b\cdot 0.99n$ edges; here $n$ is the number of
  vertices on each side of $G$.
\end{claim*}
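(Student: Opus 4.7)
My plan is to exhibit an explicit bipartite graph $G$ on $(n,n)$ vertices by combining a ``matching-rich'' dense block with a ``$b$-matching-suppressing'' sparse block, and then apply the $b$-matching characterization (Theorem~\ref{thm:b-matching-char}) together with a direct analysis of $\mathbb{E}[\mu(G_p)]$.

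First, I would partition $L = L_1 \sqcup L_2$ and $R = R_1 \sqcup R_2$ with $|L_1|=|R_1|=(1-\epsilon)n$ and $|L_2|=|R_2|=\epsilon n$ for a small constant $\epsilon>0$ to be optimized. Place a dense ``matching-enabling'' subgraph $G_1$ on $(L_1,R_1)$ --- for instance, the union of $k = \lceil (\log n)/p\rceil$ random edge-disjoint (near-)perfect matchings, which guarantees that $G_{1,p}$ contains a (near-)perfect matching between $L_1$ and $R_1$ with high probability by a standard Hall-type argument, yielding matching size $(1-\epsilon)n - o(n)$ in expectation. On $(L_2,R_2)$, place a sparse structure $G_2$ designed so that $G_{2,p}$ still contributes $\epsilon n - o(\epsilon n)$ to the matching (e.g.\ a suitable union of random matchings of sufficient multiplicity), so that summing the two blocks gives $\mathbb{E}[\mu(G_p)]\geq n-o(n)$.

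Second, to bound the maximum $b$-matching for any $b\geq 2/p$, I would invoke Theorem~\ref{thm:b-matching-char} with a carefully chosen pair $(U,W)$: for example, take $U=\emptyset$ and $W=R_2$, making the connected components $K$ of $G[V\setminus W]$ consist of $L_2$ together with $G_1$. Evaluating the expression $b|U|+|E[W]|+\sum_K \lfloor (b|K|+|E[K,W]|)/2\rfloor$ reduces the bound on the $b$-matching to an arithmetic expression in terms of $|L_2|,|R_2|$ and the internal degrees; by choosing the split parameter $\epsilon$ so that the ``surplus'' capacity of $W$ beats the trivial cap from the dense side, I would obtain that the maximum $b$-matching is strictly less than $0.99\,bn$, uniformly in $b\geq 2/p$ (noting that as $b$ grows, the target $0.99\,bn$ grows strictly faster than any structural upper bound coming from the cap).

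The hard part --- and the part that demands most of the care --- is the delicate tension between the two conditions. A matching of expected size $n-o(n)$ in $G_p$ forces, via $\mathbb{E}[\mu(G_p)]\leq n - \sum_v (1-p)^{d_v}$, that all but $o(n)$ vertices have degree $\omega(1/p)$; but once most degrees exceed $b=2/p$, the capacity-based bound $\sum_v \min(\deg(v),b)$ approaches $bn$, and so the only way to save a constant factor in the $b$-matching is to exhibit a \emph{Hall-type deficiency at the $b$-level}, i.e.\ a subset whose $b$-matching contribution is strictly suppressed by the structural min in Theorem~\ref{thm:b-matching-char} even though individual degrees are large. The role of $G_2$ in the construction is precisely to supply this deficiency, and the main technical calculation would be verifying that the parameters $(\epsilon, k, \text{structure of }G_2)$ can be tuned so that the deficiency is at least $0.01\,bn$ while the expected matching loss is $o(n)$. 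Once that balance is achieved, plugging into Theorem~\ref{thm:b-matching-char} completes the proof.
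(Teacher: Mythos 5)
There is a genuine gap: your two disjoint blocks cannot simultaneously meet both requirements, and the proposal never actually supplies the ``$b$-level deficiency'' it correctly identifies as the crux. If $G_1$ on $(L_1,R_1)$ is a union of $k=\lceil \log n/p\rceil$ edge-disjoint (near-)perfect matchings, then for every $b$ with $2/p\le b\le k$ the union of any $b$ of these matchings is already a $b$-matching of size about $b(1-\eps)n$ inside $G_1$, so you are forced to take $\eps>0.01$ and the entire deficiency must come from $G_2$. But $G_2$ must also satisfy $\Ex{\mu(G_{2,p})}=\eps n-o(n)$, i.e.\ all but $o(n)$ of its $\eps n$ vertices get matched in a realization; a realized graph of constant average degree leaves a constant fraction of vertices unmatched, so the degrees in $G_2$ must be $\omega(1/p)$, and with your suggested building block (a union of random matchings of sufficient multiplicity $m=\omega(1/p)\ge 2/p$) one can again take $b=2/p$ of those matchings and obtain a $b$-matching of size about $b\eps n$ in $G_2$. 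The whole graph then has a $b$-matching of size roughly $bn>0.99\,bn$ at $b=2/p$, which is exactly the regime the claim must handle. In effect you have reduced the claim to constructing a $G_2$ with precisely the property you are trying to establish, without constructing it; the choice of $(U,W)$ in Theorem~\ref{thm:b-matching-char} cannot rescue this, because the large $b$-matching in your graph genuinely exists.

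The missing idea, which is how the paper's construction works, is to make the bulk of the graph a \emph{sparse random} bipartite graph with only $\Theta(N/p)$ edges in total (each pair in $L_1\times R_1$ present w.p.\ $1/(pN)$), so that its realization is distributed as $\mathcal{G}_{N,1/N}$ and, by Karp and Sipser~\cite{KarpS81}, contains a matching of size $\cstar N$ with $\cstar\approx 0.56$ \emph{strictly larger than} $N/2$; the $(1-\cstar)N$ leftover vertices on each side are then absorbed by complete bipartite graphs to small sets $R_2$ and $L_2$ of size $(1-\cstar)N$ (note these are cross edges between the parts, not a disjoint second block), giving a near-perfect realized matching. For the $b$-matching bound no Hall-type computation is needed: edges touching $L_2\cup R_2$ contribute at most $2b(1-\cstar)N$, and the sparse part contributes at most its total edge count $(1+o(1))N/p\le(1/2+o(1))\,bN$ regardless of $b$, so the maximum $b$-matching is at most $bn-(\cstar-1/2-o(1))\,bN<0.99\,bn$. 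The deficiency thus comes from the gap $\cstar-1/2>0$ between the Karp--Sipser constant and one half, scaled by $b$ because the sparse part's edge budget is fixed at $N/p$ while the target $0.99\,bn$ grows with $b$; this is the mechanism your construction lacks.
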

\begin{proof}
  For any integer $N$, let $\FG_{N,\frac{1}{N}}$ be the family of bipartite random graphs with $N$ vertices on each side and probability of picking each edge being $1/N$. 
  Let $\cstar \in (0,1)$ such that any bipartite graph sampled from $\FG_{N,\frac{1}{N}}$ has a matching of size at least $\cstar \cdot N$ w.p. $1-o(1)$. 
  By a result of Karp and Sipser~\cite{KarpS81} on \emph{sparse random graphs} (see also~\cite{AronsonFP98}, Theorem~4), we have $\cstar \approx 0.56$.
   
  Consider bipartite graphs $G(L,R,E)$ where the vertices in $L$ consists of two disjoint sets $L_1$ and $L_2$ with $\card{L_1} = N$ and
  $\card{L_2} = (1-\cstar) \cdot N$ for parameter $N = \frac{n}{2-\cstar}$. Similarly, $R$ contains two sets $R_1$ and $R_2$ with $\card{R_1} = N$
  and $\card{R_2} = (1-\cstar)\cdot N$.
  
  The set of edges in $G$ can be partitioned into two parts. First, there is a complete bipartite graph between $L_1$ and $R_2$, and a complete bipartite graph
  between $L_2$ and $R_1$. Second, there is a sparse graph between $L_1$ and $R_1$ defined through the following random process: each edge
  between $L_1$ and $R_1$ is independently chosen w.p. ${1 \over pN}$.

  In the following, we show that for a graph $G$ created through the above process, w.p. $1 - o(1)$, $G_p$ has a matching of size $n-o(n)$
  in expectation, and w.p. $1 - o(1)$, there is no $b$-matching in $G$ with $b\cdot 0.99n$ edges, for $b \geq {2 \over p}$. Hence, by applying a
  union bound, the above process find a graph with both properties w.p.  $1 - o(1)$, proving the claim.
  
  To see that $G_p$ has a matching of size $n-o(n)$ in expectation, we realize the edges in $G$ in two steps: first realize the edges
  between $L_1$ and $R_1$, and then the other edges (i.e., the two complete graphs between $L_1, R_2$ and between $L_2, R_1$,
  respectively). For the subgraph between $L_1$ and $R_1$, notice that each edge between $L_1$ and $R_1$ is realized w.p.
  ${1 \over pN} \cdot p = {1 \over N}$ (chosen w.p. ${1 \over pN}$ in the above process and realize w.p. $p$). Since
  $\card{L_1} = \card{R_1} = N$, the subgraph between $L_1$ and $R_1$ is sampled from $\FG_{N,\frac{1}{N}}$ and hence w.p. $1-o(1)$, 
  there is a matching of size $\cstar N$ between $L_1$ and $R_1$. Now for the remaining $(1-\cstar) N$ unmatched vertices in $L_1$ (resp. in $R_1$), since there is a complete graph between
  $L_1$ and $R_2$ (resp. $R_1$ and $L_2$), w.p.  $1 - o(1)$, a perfect matching realizes between the unmatched vertices in $L_1$ and vertices in $R_2$
  (resp. between $R_1$ and $L_2$). We conclude that any realization $G_p$ has a perfect matching w.p. $1-o(1)$ and hence the expected maximum matching
  size in $G_p$ is at least $(1 - o(1)) n + o(1) \cdot 0 = n - o(n)$.

  It remains to show that w.p. $1 - o(1)$, $G$ has no $b$-matching with $b\cdot 0.99n$ edges for $b\geq{2 \over p}$. For any $b$-matching in $G$,
  the number of edges incident on $L_2$ and $R_2$ is at most $b \cdot (\card{L_2} + \card{R_2}) = 2 b N/(1-\cstar)$. The remaining edges of this
  $b$-matching must be between $L_1$ and $R_1$. Each edge between $L_1$ and $R_2$ is chosen w.p. ${1 \over pN}$, and there are $N^2$
  possible edges between $L_1$ and $R_1$. By Chernoff bound, w.p. $1 - o(1)$, the number of realized edges between $L_1$ and $R_1$ is at
  most $(1 +o(1)) {N \over p}$. Therefore, the total number of edges of any $b$-matching in
  $G$ is at most
  \begin{align*}
    2 b (1-\cstar) N + (1 + o(1)){N \over p} &= b\cdot n - \paren{\cstar b  \cdot N - \frac{N}{p}} + o(n) \tag{$(2-\cstar) \cdot N = n$}\\
    & \leq b\cdot n - \paren{0.56 b N - 0.5bN} + o(n) \tag{$b \geq 2/p$ and hence $1/p \leq b/2$; $\cstar \approx 0.56$}\\
    & < b\cdot 0.99 n 
    \end{align*}
\end{proof}

\end{document}